\documentclass[11pt]{article}

\usepackage[utf8]{inputenc}
\usepackage{amsmath,amsthm,amssymb}
\usepackage{algorithm}
\usepackage[noend]{algcompatible}
\usepackage{latex_abbreviations}
\usepackage[authoryear]{natbib}
\usepackage{hyperref}
\usepackage{cleveref}
\usepackage{mathtools}
\usepackage{thmtools,thm-restate}
\usepackage[cmyk]{xcolor}
\usepackage{subfig}
\usepackage{fullpage}
\allowdisplaybreaks
\newcommand{\err}[1]{\overline{#1}}
\DeclareMathOperator*{\argmax}{argmax}
\DeclareMathOperator*{\argmin}{argmin}

\theoremstyle{plain}
\newtheorem{lemma}{Lemma}[]

\begin{document}

\title{The Secretary Problem with Predictions}

\author{%
Kaito Fujii\\
National Institute of Informatics
\and
Yuichi Yoshida\\
National Institute of Informatics
}

\maketitle

\begin{abstract}
The value maximization version of the secretary problem is the problem of hiring a candidate with the largest value from a randomly ordered sequence of candidates. In this work, we consider a setting where predictions of candidate values are provided in advance. We propose an algorithm that achieves a nearly optimal value if the predictions are accurate and results in a constant-factor competitive ratio otherwise. We also show that the worst-case competitive ratio of an algorithm cannot be higher than some constant $< 1/\mathrm{e}$, which is the best possible competitive ratio when we ignore predictions, if the algorithm performs nearly optimally when the predictions are accurate. Additionally, for the multiple-choice secretary problem, we propose an algorithm with a similar theoretical guarantee. We empirically illustrate that if the predictions are accurate, the proposed algorithms perform well; meanwhile, if the predictions are inaccurate, performance is comparable to existing algorithms that do not use predictions.
\end{abstract}

\section{Introduction}\label{sec:intro}

Online algorithms have been developed to make decisions when information concerning the future is limited.
Considering that recent advances in machine learning have made it possible to predict the future to some extent, it is natural to consider boosting the performance of online algorithms by incorporating these predictions.
In this work, we consider incorporating machine learning predictions into the secretary problem, which is one of the most basic problems in online algorithms.

The \textit{classical secretary problem} models a hiring process where the decision maker interviews candidates one by one in a \textit{uniformly random order}.
The goal of the classical secretary problem is to hire the best candidate from $n$ candidates.
In this study, we consider the value maximization version, in which the decision maker is sequentially provided a real value that represents each candidate's ability and must irrevocably decide whether to hire him/her.
The classical secretary problem has been extensively studied since the 1960s and applied to hiring and many other tasks such as mechanism design \citep{Kle05} and active learning \citep{SH17}.

In the secretary problem, the decision maker knows nothing about the values of the candidates in advance.
However, in realistic scenarios, we can often use prior knowledge about the candidates.
For example, before interviews, it is possible to infer the ability of each candidate by reading his/her r\'esum\'e or consulting public opinion.
Due to the recent development of machine learning for human resources, we can now easily predict the abilities of candidates using public information, and it is natural to consider exploiting these predictions to design better hiring strategies.

In this work, we consider the \emph{secretary problem with predictions}, wherein predictions of candidates' values are given in advance.
The main application we consider is the following employment problem:
Given the r\'esum\'es of the candidates or the public opinion, we predict the abilities of the candidates in advance by utilizing machine learning algorithms.
Note that these information sources might be maliciously manipulated, i.e., the candidates might include lies in their r\'esum\'es to be appreciated more than they really are, or the public opinion gathered on the web might contain false information.
After obtaining the predicted values, the decision maker schedules interviews with each candidate.
Since the order of interviews depends on the personal schedule of each candidate, we can naturally assume that it is a uniform random order.
At the interview of each candidate, the decision maker must immediately decide whether to hire the candidate to prevent the candidate from going to another company.

One extreme approach to this problem is to blindly trust the predictions and hire the candidate with the largest predicted value.
If the predictions are accurate, this approach performs well.
However, if the predictions are adversarially perturbed by malicious agents, this approach performs poorly.
In the above hiring example, if the candidates submit false information for their own interest, the prediction accuracy can arbitrarily deteriorate.
The other extreme approach is to completely ignore the predictions and apply a known algorithm that does not use the predictions.
This approach is not disturbed by adversarially manipulated predictions, but it does not take advantage of the predictions even when they are completely accurate.
We aim to obtain the best of both approaches, that is, an algorithm that achieves better performance when the predictions are sufficiently accurate and the performance comparable to that of the known algorithm even when the predictions are inaccurate.

\subsection{Our contributions}

For the classical secretary problem with predictions, we develop a $\max\{ 0.215, \frac{1-\epsilon}{1+\epsilon} \}$-competitive randomized algorithm, where $\epsilon$ is the largest multiplicative error of the predictions.
We note that our algorithm does not have to know $\epsilon$ in advance.
This competitive ratio converges to $1$ if $\epsilon$ approaches $0$, while it is not worse than a constant of $0.215$ even if $\epsilon$ is large.
Our approach is to modify the well-known $(1/\mathrm{e})$-competitive algorithm~\citep{Dyn63} for the classical secretary problem, which does not use predictions, so that it hires the candidate with the largest predicted value if all the predictions are accurate.
At the arrival of each candidate, the modified algorithm checks whether the predicted value is close to the actual value.
Until the algorithm observes a candidate whose actual value deviates much from the predicted value, it waits for the candidate with the largest predicted value.
If the algorithm has observed such a candidate, it switches to a strategy similar to the algorithm developed by~\cite{Dyn63}.
We show that this modified algorithm achieves a constant competitive ratio even when the predicted values are inaccurate.

On the hardness side, we show that for the classical secretary problem with predictions, any deterministic algorithm cannot be $\max \{ 0.25,  1 - C\epsilon \}$-competitive, and any randomized algorithm cannot be $\max \{ 0.348,  1 - C \epsilon \}$-competitive for any constant $C > 0$.
Since there exists a $1/\rme \approx 0.367$-competitive algorithm for the classical secretary problem without predictions, we can say that it is impossible to simultaneously achieve both $1/\rme$-competitiveness of the classical secretary algorithm and $(1-\rmO(\epsilon))$-competitiveness of the algorithm that hires the best candidate in terms of the predicted values.

Moreover, we apply the same approach to the \textit{multiple-choice secretary problem}, in which the decision maker can hire $k$ candidates, where $k$ is a positive integer given in advance.
As with the classical secretary problem described above, we modify a $(1 - \mathrm{O}(\frac{1}{\sqrt{k}}))$-competitive algorithm~\citep{Kle05} for the multiple-choice secretary problem, which does not use predictions.
We show that the modified algorithm is $(1 - \mathrm{O}(\min\{\frac{\ln k}{\sqrt{k}},\epsilon\} ))$-competitive, which becomes $(1-\mathrm{O}(\epsilon))$-competitive if $\epsilon$ is small, while it achieves almost the same competitive ratio as the original algorithm if $\epsilon$ is large.

\subsection{Outline of this paper}

This paper is organized as follows.
\begin{itemize}
	\item In \Cref{sec:classical}, for the classical secretary problem with predictions, we propose an algorithm with a competitive ratio of $\max \{0.215, \frac{1 - \epsilon}{1 + \epsilon} \}$, where $\epsilon$ is the largest multiplicative error of a predicted value.
	\item In \Cref{sec:hardness}, for the classical secretary problem with predictions, we provide hardness results on the worst-case competitive ratio for deterministic and randomized algorithms.
	\item In \Cref{sec:multiple}, for the multiple-choice secretary problem with predictions, wherein the algorithm is allowed to hire $k$ candidates, we propose an algorithm with a competitive ratio of $1 - \mathrm{O}( \min \{\frac{\ln k}{\sqrt{k}}, \epsilon \})$.
	\item In \Cref{sec:epsilon}, we show that the same competitive ratio bounds hold as in \Cref{sec:classical} and \Cref{sec:multiple} for different definitions of $\epsilon$ that only depend on the prediction errors of relevant candidates.
	\item In \Cref{sec:experiments}, we conducted experiments that corroborate our theoretical results and empirically illustrate the advantages of the proposed algorithms in comparison to existing benchmarks.
\end{itemize}

\subsection{Related work}

The classical secretary problem was proposed by \citet{Gar60}.
\citet{Dyn63} demonstrated a $(1 / \mathrm{e})$-competitive algorithm for the classical secretary problem, which is known to be optimal.
\citet{Lin61} and \citet{GM66} also contributed to the early development of the secretary problem.
We should note that in the original version, the decision maker can only compare the candidates that have already arrived, and the goal is to maximize the probability that the best candidate is hired.

For the multiple-choice secretary problem, \citet{Kle05} proposed an algorithm with a competitive ratio of $1 - \mathrm{O}(\frac{1}{\sqrt{k}})$ and proved that it is asymptotically optimal.
\citet{BIKK07} proposed a $(1 / \mathrm{e})$-competitive algorithm for the same problem, and it was improved for small $k$ values by \citet{AL19}.

A series of studies on prophet secretary~\citep{EHLM17, CFHOV21, EHKS18, ACK18, CSZ21} are similar to our current study.
In the \emph{prophet secretary problem}, in addition to the random order assumption, the decision maker knows the distributions of the candidates' values in advance.
It is known that for several combinatorial constraints, we can achieve a competitive ratio better than that in the secretary problem.
Our algorithms require just a scalar predicted value for each candidate, while algorithms for prophet secretary require the true distributions of the actual values.
Additionally, even when the predicted values are adversarially chosen, our algorithms have constant competitive ratios, while the algorithms for prophet secretary do not provide any such guarantee if the given distributions are even slightly perturbed.

Several existing studies considered variants of the secretary problem similar to ours.
In the problems considered by \citet{CS81}, \citet{KNR20}, and \citet{CCFOT21}, the ground set is randomly partitioned into samples and candidates, and the samples are given in advance to estimate the candidates' values.
Since the samples are randomly selected from the ground set shared with the candidates, they are correlated with the actual values of the candidates.
On the other hand, in our setting, the actual values of the candidates can be adversarially manipulated.

\citet{BGSZ20} and \citet{KM20} considered problem settings in which several adversarial candidates are mixed into random order candidates.
In their setting, the decision maker does not observe any auxiliary information like predictions.
Moreover, their studies interpolate between the random order setting and the adversarial order setting, whereas we interpolate between the random order setting and the trivial setting with known values.

A generic approach to analyzing the competitive ratio for the secretary problem is using linear programming.
\citet{BJS14} proposed linear programming formulations that compute the optimal competitive ratios for the classical secretary problem and its extensions.
\citet{CCJ15} and \citet{DLLV21} extended this LP formulation to more general settings of secretary problems and analyzed the optimal competitive ratios.
In \Cref{sec:hardness}, we solve a linear programming problem to provide an upper bound on the optimal competitive ratios, but our approach is essentially different from these existing approaches.
Since the algorithm is required to hire a nearly optimal candidate if the prediction errors are small,
our LP formulation considers multiple problem instances with the same predicted values and imposes constraints that require the algorithm to hire the candidate with the largest predicted value if a large prediction error is not yet observed,
while the existing LP formulations consider a single problem instance and do not have such a constraint.

\emph{Learning-augmented algorithms} are those based on the idea that we can improve the performance of existing algorithms by using machine learning predictions concerning the input.
This approach has been applied to various problems, including bloom filters~\citep{Mit18}, the ski rental problem~\citep{PSK18,GP19,WZ20}, the caching problem~\citep{LV18,Roh20,Wei20}, branch-and-bound~\citep{BDSV18}, online scheduling~\citep{LLMV20}, metrical task systems~\citep{ACE0S20}, speed scaling~\citep{BMRS20}, and the primal-dual method~\citep{BMS20}.
These studies are similar to ours in that they aim to obtain a better solution if the predictions are accurate while retaining reasonable worst-case guarantees.

\citet{AGKK20} considered the problem setting closest to ours.
In their setting, only the prediction of the maximum value is available, while in our setting, the predictions of all the candidates' values are available.
By using more predicted values, we provide better bounds on the competitive ratio than the algorithm developed by \citet{AGKK20}.
We provide a detailed comparison with their guarantees on the competitive ratio concerning the classical secretary problem with predictions in \Cref{sec:agkk}.

\section{Preliminaries}

The value maximization version of the classical secretary problem is formulated as the problem of hiring a single candidate from $n$ candidates that sequentially arrive.
Let $N = [n] = \{1, \ldots, n\}$ be the set of all candidates.
Each candidate $i \in N$ has a real value $v(i) \in \bbR_{\ge 0}$ that represents his/her ability, but this value is not revealed until the candidate is interviewed.
Initially, the decision maker knows only the number of candidates and nothing about their values.
The candidates in $N$ appear in random order, i.e., the order of the candidates is sampled uniformly at random from the $n!$ permutations.
At the arrival of each candidate, according to his/her value, the decision maker must decide whether to hire this candidate.
This decision is irrevocable, i.e., the decision maker can only hire a candidate right after the interview, and if the decision maker decides to hire the candidate, the hiring process immediately terminates.

In this paper, we consider the learning-augmented setting.
Suppose the algorithm knows the predicted value $\hat{v}(i) \in \bbR_{\ge 0}$ of the actual value $v(i)$ for each $i \in N$ in advance.
At the arrival of each candidate, the algorithm is notified of the index $i \in N$ and the actual value $v(i)$ of the candidate.
Let $\epsilon = \max_{i \in N} \left| 1 - \frac{\hat{v}(i)}{v(i)} \right|$ be the largest multiplicative error among all candidates.
From this definition, we always have $(1 - \epsilon) v(i) \le \hat{v}(i) \le (1 + \epsilon) v(i)$ for all $i \in N$.

In the multiple-choice secretary problem, the capacity $k$ is given in addition to $n$.
As with the classical setting, the decision maker irrevocably decides whether to hire each candidate at his/her arrival and finishes hiring when $k$ candidates are hired or all candidates are interviewed.
By abuse of notation, we write $v(S) = \sum_{i \in S} v(i)$ and $\hat{v}(S) = \sum_{i \in S} \hat{v}(i)$ for $S \subseteq N$.
The goal is to maximize the total value of the hired candidates, $v(S)$, where $S$ is the set of hired candidates.

In both settings, our goal is to design an algorithm that has a theoretical guarantee on its competitive ratio.
We say an algorithm is \emph{$\alpha$-competitive} if the expected value of the output of the algorithm is at least $\alpha$ times the optimal value, where the expectation is taken over the random permutation and the randomness of the algorithm.
Note that our goal is different from the original goal of the classical secretary problem, which is to maximize the probability of hiring the best candidate.
The value maximization problem and the probability maximization problem are equivalent in the original setting, but they are different when predictions can be used.

\subsection{Continuous-time model}
A continuous-time model~\citep{Bruss84, Kle05, FNS11} is a problem setting equivalent to the random order model.
In this model, each candidate $i \in N$ is independently assigned an arrival time $t_i \in [0, 1]$ that is generated from the uniform distribution on $[0, 1]$.
At each time when a candidate $i \in N$ appears at time $t_i$, the decision maker must make an irrevocable decision.

These two models are equivalent, i.e., we can reduce one model to the other model.
We can reduce the random order model to the continuous-time model by independently generating $n$ numbers from the uniform distribution on $[0, 1]$, sorting them in advance, and assigning them as the arrival times to the candidates in order.
The other direction of reduction is straightforward.
As in several existing studies such as \citet{Kle05}, since the continuous-time model is often easier to analyze than the random order model, we adopt the continuous-time model in \Cref{sec:classical} and \Cref{sec:multiple}.
Note that our results can be applied to the random order model by using the reduction mentioned above.

\section{Classical secretary problem}\label{sec:classical}
In this section, we propose an algorithm for the classical secretary problem with predictions.
Our algorithm combines two strategies.
The first strategy is to hire $\hat{\imath} \in \argmax_{i \in N} \hat{v}(i)$, which is a candidate with the top predicted value.
The second strategy is based on Dynkin's algorithm~\citep{Dyn63} in the continuous-time model, which ignores all candidates before time $\tau$ and hires the first candidate after time $\tau$ that is the best thus far.
Dynkin's algorithm in the continuous-time model is $(1/\mathrm{e})$-competitive without predictions if we set $\tau = 1/\mathrm{e}$.
Our proposed algorithm, which we call \textit{learned Dynkin}, initially follows the first strategy but switches to the second strategy if any candidate appears whose multiplicative error is larger than $\theta$.
A detailed description is given in \Cref{alg:classical}.

\begin{algorithm}
	\caption{Learned Dynkin}\label{alg:classical}
	\begin{algorithmic}[1]
		\REQUIRE Time $\tau$, threshold $\theta$, predictions $\hat{v} \colon N \to \bbR$.
		\STATE $\hat{\imath} \in \argmax_{i \in N} \hat{v}(i)$.
		\STATE $\text{mode} \gets \textsc{Prediction}$.
		\FOR{each candidate $i \in N$ in random order}
			\IF{$\displaystyle \left| 1 - \frac{\hat{v}(i)}{v(i)} \right| > \theta$}\label{ln:classical-switch}
				\STATE $\text{mode} \gets \textsc{Secretary}$.
			\ENDIF
			\IF{$\text{mode} = \textsc{Prediction}$ and $i = \hat{\imath}$}
				\STATE Hire $i$.
			\ENDIF
			\IF{$\text{mode} = \textsc{Secretary}$ and $t_i > \tau$ and $i$ is the best so far}
				\STATE Hire $i$.
			\ENDIF
		\ENDFOR
	\end{algorithmic}
\end{algorithm}

If $\theta = 0$ and $\tau = 1/\mathrm{e}$, \Cref{alg:classical} almost coincides with Dynkin's algorithm, which is the best algorithm for the classical secretary algorithm without predictions.
On the other hand, if $\theta$ is sufficiently large, \Cref{alg:classical} always hires $\hat{\imath}$, which is an optimal strategy if the predictions are accurate.
Here we show that if $\theta$ is an intermediate value, \Cref{alg:classical} hires an almost optimal candidate when the predictions are accurate and achieves a constant-factor competitive ratio when the predictions are inaccurate.

Here we define the notations used in the proof.
Let $i^* \in \argmax_{i \in N} v(i)$ be an optimal candidate.
Let $M = \left\{ i \in N \,\middle|\, \left| 1 - \frac{\hat{v}(i)}{v(i)} \right| > \theta \right\}$ be the candidates that significantly deviate from the prediction.
When the first candidate in $M$ appears, the mode of \Cref{alg:classical} switches to \textsc{Secretary}.
Let $m = |M|$.

In the proof, we consider the cases where $M = \emptyset$ and $M \neq \emptyset$ separately.
If $M = \emptyset$, we can easily show that \Cref{alg:classical} follows the predictions and achieves $\frac{1-\epsilon}{1+\epsilon}$-competitiveness.
If $M \neq \emptyset$, we consider the cases where $\hat{\imath} = i^*$ and $\hat{\imath} \neq i^*$ separately, and for each case, we further consider whether $\hat{\imath} \in M$ and whether $i^* \in M$.
For each case, we provide a lower bound on the competitive ratio achieved by \Cref{alg:classical}.

Now we claim that, when $\theta$ and $\tau$ are appropriately chosen, the proposed algorithm is $\frac{1 - \epsilon}{1 + \epsilon}$-competitive if the predictions are accurate, while it achieves a competitive ratio of $0.215$ if the predictions are inaccurate.

\begin{restatable}{theorem}{thmclassical}\label{thm:classical}
	If we set $\theta = 0.646$ and $\tau = 0.313$, then \Cref{alg:classical} is $\max \left\{ 0.215, \frac{1 - \epsilon}{1 + \epsilon} \right\}$-competitive for the classical secretary problem with predictions.
\end{restatable}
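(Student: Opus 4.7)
The plan is to condition on the set $M$ of candidates whose actual value deviates from the prediction by more than $\theta$, and on the arrival time $T = \min_{i \in M} t_i$ at which the algorithm switches modes. When $M = \emptyset$ the algorithm never switches and simply hires $\hat{\imath}$; when $M \neq \emptyset$ we fall back on a Dynkin-style analysis on the interval $[\max\{\tau, T\}, 1]$. The first case is immediate: since no candidate violates the threshold, $\epsilon \le \theta$ and
\[
v(\hat{\imath}) \;\ge\; \frac{\hat{v}(\hat{\imath})}{1+\epsilon} \;\ge\; \frac{\hat{v}(i^*)}{1+\epsilon} \;\ge\; \frac{1-\epsilon}{1+\epsilon}\, v(i^*),
\]
which already yields the $\frac{1-\epsilon}{1+\epsilon}$ side of the bound for any choice of $\tau$ and $\theta$.

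The substantive work is the case $M \neq \emptyset$, where we target the constant $0.215$. I would enumerate subcases according to (i) whether $\hat{\imath} = i^*$, (ii) whether $\hat{\imath} \in M$, and (iii) whether $i^* \in M$, as announced in the text preceding the theorem. In each configuration I would write the expected output as the sum of two contributions. The \textsc{Prediction}-mode contribution fires on the event $\{t_{\hat{\imath}} < T\}$ and pays $v(\hat{\imath})$, which is bounded below by $(1-\theta) v(i^*)$ whenever $\hat{\imath} = i^*$ and $\hat{\imath} \notin M$, and by $0$ otherwise (the adversary can make $\hat{\imath}$ worthless on this event). The \textsc{Secretary}-mode contribution pays $v(i^*)$ times the probability that the algorithm, having switched at some time $T$, hires $i^*$; this can be computed by a double integral over $t_{i^*}$ and $T$, using the standard continuous-time identity that, conditioned on $t_{i^*} = t > \max\{\tau, T\}$, the running maximum on $(T, t)$ lies in $(T, \max\{\tau, T\}]$ with probability $\frac{\max\{\tau, T\}}{t}$, and that the density of $T$ is $m(1-T)^{m-1}$.

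The main obstacle I anticipate is controlling the dependence on $m = |M|$: larger $m$ pushes $T$ earlier and may kill the \textsc{Prediction}-mode contribution, but also removes ``distractor'' candidates that could be falsely hired after the switch, so I would expect each per-subcase ratio to be monotone in $m$ and attained at $m = 1$. Verifying this monotonicity, and in particular handling the single adversarial bad candidate optimally placed to straddle $\tau$, is the delicate part of the argument. Once the enumeration collapses to $m = 1$, every subcase reduces to an explicit function of $(\theta, \tau)$; the pair $\theta = 0.646$, $\tau = 0.313$ should then be verified to equalize the two or three binding subcase bounds at the common value $0.215$, so that taking the maximum with the $\frac{1-\epsilon}{1+\epsilon}$ bound from the first case gives the advertised competitive ratio.
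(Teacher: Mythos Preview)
Your treatment of the $M=\emptyset$ case is correct, but the $M\neq\emptyset$ analysis has a fatal gap. You assert that the \textsc{Prediction}-mode contribution is bounded below by $0$ whenever it is not the case that $\hat{\imath}=i^*$ and $\hat{\imath}\notin M$, on the grounds that ``the adversary can make $\hat{\imath}$ worthless.'' This is false in the subcase $\hat{\imath}\neq i^*$, $\hat{\imath}\notin M$, $i^*\notin M$: since neither candidate is in $M$, the threshold constraint forces $v(\hat{\imath})\ge\frac{\hat{v}(\hat{\imath})}{1+\theta}\ge\frac{\hat{v}(i^*)}{1+\theta}\ge\frac{1-\theta}{1+\theta}\,v(i^*)$, so hiring $\hat{\imath}$ in \textsc{Prediction} mode already yields a $\frac{1-\theta}{1+\theta}$ fraction of optimum. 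This contribution is not optional: at $m=1$ in this subcase, the \textsc{Secretary}-mode probability of hiring $i^*$ is only $\tfrac{\tau(1-\tau)}{2}\approx 0.108$ with $\tau=0.313$, and it is precisely the additional $\tfrac{1}{m+1}\cdot\tfrac{1-\theta}{1+\theta}\approx 0.108$ from the \textsc{Prediction} mode that brings the total to $0.215$. Without it, no choice of $(\theta,\tau)$ can push this subcase above $\max_\tau \tfrac{\tau(1-\tau)}{2}=0.125$. This subcase is in fact one of the binding constraints that determines the parameters $\theta=0.646$, $\tau=0.313$.

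Your monotonicity-in-$m$ conjecture is also wrong in the direction you need. In the subcase $\hat{\imath}=i^*\notin M$, the combined bound is $\tfrac{1}{m+1}$ (from \textsc{Prediction}) plus a \textsc{Secretary} term that increases from $\tau(1-\tau)$ to $\tau\ln(1/\tau)$; the sum \emph{decreases} in $m$, from roughly $0.715$ at $m=1$ toward $0.363$. The paper does not reduce to $m=1$; instead it derives closed-form bounds for each of six subcases (some of which are independent of $m$), identifies the parameters $(\theta,\tau)$ by grid search over small $m$, and then verifies the claim for all $m$ by exhibiting, for the $m$-dependent subcases, a crude monotone lower bound that exceeds $0.215$ once $m$ is past a small threshold, while checking the finitely many remaining $m$ numerically. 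A secondary issue: your \textsc{Secretary}-mode probability statement is garbled---``best so far'' in the algorithm ranges over $[0,t]$, not $(T,t]$, and the paper uses the standard $\tau/t$ bound rather than $\max\{\tau,T\}/t$, precisely to avoid the dependence between $T$ and the other arrival times.
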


\begin{proof}
	First, we consider the case where $M = \emptyset$.
	In this case, since $\epsilon \le \theta$ holds, the mode of the algorithm is \textsc{Prediction} throughout the process, and therefore, the algorithm hires $\hat{\imath}$.
	From the definition of $\epsilon$, we have $(1 - \epsilon) v(i) \le \hat{v}(i) \le (1 + \epsilon) v(i)$ for all $i \in N$.
	Therefore, the value obtained by the algorithm is bounded as follows.
	\begin{equation}\label{eq:empty}
		v(\hat{\imath}) \ge \frac{1}{1+\epsilon} \hat{v}(\hat{\imath}) \ge \frac{1}{1+\epsilon} \hat{v}(i^*) \ge \frac{1-\epsilon}{1+\epsilon} v(i^*).
	\end{equation}
	Hence, the algorithm achieves the competitive ratio $\frac{1-\epsilon}{1+\epsilon}$.
	In this case, since $\epsilon \le \theta = 0.646$, the competitive ratio is at least $\frac{1-\epsilon}{1+\epsilon} \ge \frac{1 - 0.646}{1 + 0.646} \ge 0.215$, which implies $\max \left\{ 0.215, \frac{1 - \epsilon}{1 + \epsilon} \right\}$-competitiveness.

	Next, we consider the case where $M \neq \emptyset$.
	In this case, since $\epsilon > \theta = 0.646$ holds, we have $\frac{1 - \epsilon}{1 + \epsilon} < \frac{1 - 0.646}{1 + 0.646} \le 0.2151$.
	In the remainder of the proof, assuming $\epsilon > \theta$, we prove that the algorithm is $0.2151$-competitive, which implies $\max \left\{ 0.215, \frac{1 - \epsilon}{1 + \epsilon} \right\}$-competitiveness.
	Since $M \neq \emptyset$ in this case, the mode of the algorithm switches from \textsc{Prediction} to \textsc{Secretary} when the first candidate in $M$ appears.
	Let $t^*$ be the time when $i^*$ appears and $\hat{t}$ be the time when $\hat{\imath}$ appears.
	Let $t_M$ be the time when the first candidate in $M$ appears.
	Note that the mode of the algorithm switches from \textsc{Prediction} to \textsc{Secretary} at time $t_M$.

	\begin{enumerate}
		\item[(i)] In the case where $\hat{\imath} = i^* \in M$.

			Since $\hat{\imath} \in M$, the algorithm never hires a candidate when the mode is \textsc{Prediction}.
			Hence, it is sufficient to consider the case where the algorithm hires a candidate when the mode is \textsc{Secretary}.
			When we fix $t^* \in [\tau, 1]$, the algorithm hires $i^*$ if the best candidate in $[0, t^*)$ appears before $\tau$ (this is a sufficient but not a necessary condition).
			If we fix the set of candidates before $t^*$, the best candidate before $t^*$ is also fixed.
			Since the time when this candidate appears conforms to the uniform distribution on $[0, t^*)$, the probability that this time is earlier than $\tau$ is $\tau / t^*$.
			By taking the expectation over $t^*$, we can show that the success probability is at least
				$\int_\tau^1 \frac{\tau}{t^*} \mathrm{d}t^* = \tau \ln \frac{1}{\tau}$.

		\item[(ii)] In the case where $\hat{\imath} = i^* \not\in M$.

			The algorithm hires $\hat{\imath}$ in the \textsc{Prediction} mode when $\hat{\imath}$ appears before all the candidates in $M$.
			Since $M \cup \{\hat{\imath}\}$ appear in random order,
			this probability is $\frac{1}{m+1}$, where $m = |M|$.
			Next, we consider the probability that the algorithm hires $i^*$ in the \textsc{Secretary} mode.
			Let $t^*$ be the time when $i^*$ appears and fix $t^*$.
			When we assume $t^* \in [\tau, 1]$, a sufficient condition for success is that $t_M < t^*$ and the best candidate in $[0, t^*)$ appears before $\tau$.
			Note that, since $\hat{\imath} = i^*$, the algorithm does not hire $\hat{\imath}$ in the \textsc{Prediction} mode when $t_M < t^*$.
			Since the time of each candidate in $M$ conforms to independent uniform distribution on $[0,1]$, the probability that $t_M < t^*$ is $1 - (1 - t^*)^m$.
			The probability that the best candidate in $[0, t^*)$ appears before $\tau$ is $\frac{\tau}{t^*}$.
			By taking the expectation over $t^*$, we can show the success probability is at least
			\begin{align*}
				& \frac{1}{m+1} + \int_\tau^1 \left\{ 1 - {(1 - t^*)}^m \right\} \frac{\tau}{t^*} \mathrm{d}t^* \\
				&= \frac{1}{m+1} + \int_\tau^1 \left\{ 1 - \sum_{k=0}^m {m \choose k} (-t^*)^k \right\} \frac{\tau}{t^*} \mathrm{d}t^* \\
				&= \frac{1}{m+1} + \int_\tau^1 \left\{ 1 - {m \choose 0} (-t^*)^0 - \sum_{k=1}^m {m \choose k} (-t^*)^k \right\} \frac{\tau}{t^*} \mathrm{d}t^* \\
				&= \frac{1}{m+1} + \int_\tau^1 \left\{ - \sum_{k=1}^m {m \choose k} (-t^*)^k \right\} \frac{\tau}{t^*} \mathrm{d}t^* \\
				&= \frac{1}{m+1} + \int_\tau^1 \left\{ \sum_{k=1}^m {m \choose k} (-t^*)^{k-1} \right\} \tau \mathrm{d}t^* \\
				&= \frac{1}{m+1} + \tau \sum_{k=1}^m {m \choose k} \int_\tau^1 (-t^*)^{k-1} \mathrm{d}t^* \\
				&= \frac{1}{m+1} + \tau \sum_{k=1}^m {m \choose k} (-1)^{k-1} \frac{1-\tau^k}{k}.
			\end{align*}

		\item[(iii)] In the case where $\hat{\imath} \neq i^*$, $\hat{\imath} \in M$, and $i^* \in M$.

			Since $\hat{\imath} \in M$, the algorithm switches to the \textsc{Secretary} mode when observing $\hat{\imath}$.
			The algorithm hires $i^*$ if $t^* \in [\tau, 1]$ and any candidate before $i^*$ is not hired.
			Hence, a sufficient condition for hiring $i^*$ is that $t^* \in [\tau, 1]$ and the best candidate in $[0,t^*)$ appears before $\tau$.
			Note that the algorithm might successfully hire $i^*$ if the algorithm observes the best candidate in $[0,t^*)$ after $\tau$ in the \textsc{Prediction} mode, but we ignore this probability.
			The probability that the sufficient condition holds can be computed in the same way as (i) as $\int_\tau^1 \frac{\tau}{t^*} \mathrm{d}t^* = \tau \ln \frac{1}{\tau}$.

		\item[(iv)] In the case where $\hat{\imath} \neq i^*$, $\hat{\imath} \in M$, and $i^* \not\in M$.

			Since $\hat{\imath} \in M$, the algorithm never hires a candidate in the \textsc{Prediction} mode.
			Hence, it is sufficient to consider the probability that the algorithm hires $i^*$ in the \textsc{Secretary} mode.
			Since $i^* \not\in M$ and $i^* \neq \hat{\imath}$, it is necessary that a candidate in $M$ appears before $t^*$, i.e., $t_M \in [0, t^*)$.
			If $t^* \in [\tau, 1]$, $t_M \in [0, t^*)$, and the best candidate in $[0, t^*)$ appears before $\tau$, then the algorithm hires $i^*$.
			In the same way as the analysis of the \textsc{Secretary} mode in case (ii), the success probability is at least
			\begin{equation*}
				\int_\tau^1 \left\{1 - (1 - t^*)^m \right\} \frac{\tau}{t^*} \mathrm{d}t^* = \tau \sum_{k=1}^m {m \choose k} (-1)^{k-1} \frac{1-\tau^k}{k}.
			\end{equation*}

		\item[(v)] In the case where $\hat{\imath} \neq i^*$, $\hat{\imath} \not\in M$, and $i^* \in M$.

			We fix $t^* \in [\tau, 1]$.
			First, we consider the case where all candidates in $M$ appear during $[t^*, 1]$, which occurs with probability $(1-t^*)^{m-1}$.
			In this case, the sufficient condition is that $\hat{\imath}$ also appears after $t^*$.
			The success probability in this case is
			\begin{align*}
				\int_\tau^1 {(1 - t^*)}^m \mathrm{d}t^* = \frac{1}{m+1} {(1-\tau)}^{m+1}.
			\end{align*}

			Next, we consider the case where at least one candidate in $M \setminus \{i^*\}$ appears before $t^*$.
			This occurs with probability $1 - (1 - t^*)^{m-1}$, and the probability density function of $t_M$ is $f(t_M) = (m-1)(1-t_M)^{m-2}$.
			When we fix $t^*$, $\hat{\imath}$ appears after $t^*$ with probability $1-t^*$,
			and $\hat{\imath}$ appears before $t^*$ otherwise.
			If $\hat{t} \ge t^*$, a sufficient condition for success is that the best candidate in $[0, t^*)$ appears before $\tau$.
			The success probability in this case is
			\begin{align*}
				\int_\tau^1 (1-t^*) \left\{ 1 - {(1 - t^*)}^{m-1} \right\} \frac{\tau}{t^*} \mathrm{d}t^*.
			\end{align*}

			If $\hat{t} < t^*$, a sufficient condition for success is that $t_M \in [0, \tau)$, $\hat{t} > t_M$, and the best candidate in $[0, t^*)$ appears before $\tau$.
			Let $i_b$ be the best candidate in $[0, t^*)$ and $t_b$ be the time when $i_b$ appears.
			When we fix $t^* \in [\tau, 1]$, $t_M \in [0, \tau)$ and the set of candidates before $t^*$, the success probability conditioned on $t^*$ and $t_M$ can be bounded in the following three cases separately.
			\begin{itemize}
				\item If $i_b =\hat{\imath}$, a necessary and sufficient condition is $t_M < \hat{t} < \tau$.
				This occurs with probability $\frac{\tau - t_M}{t^*}$.
				\item If $i_b \in M$, a sufficient condition is $\hat{t} \ge t_M$ and $t_b \le \tau$.
				In this case, $t_b$ is equal to $t_M$ if $i_b$ is the first candidate in $M$,
				and $t_b$ conforms to the uniform distribution on $[t_M, t^*)$ otherwise.
				In both cases, the probability that $t_b \le \tau$ holds is at least $\frac{\tau - t_M}{t^* - t_M}$.
				Since $\hat{t}$ conforms to the uniform distribution on $[0, t^*)$, the probability that $\hat{t} \ge t_M$ holds is at least $\frac{t^* - t_M}{t^*}$.
				The success probability is at least
				\begin{align*}
					\frac{t^* - t_M}{t^*} \frac{\tau - t_M}{t^* - t_M} = \frac{\tau - t_M}{t^*}.
				\end{align*}
				\item If $i_b \not\in M \cup \{\hat{\imath}\}$, a sufficient condition is $\hat{t} \ge t_M$ and $t_b \le \tau$. Note that $\hat{t}$ and $t_b$ are independent from $t_M$.
				Since $\hat{t} \ge t_M$ holds with probability $\frac{t^* - t_M}{t^*}$ and $t_b \le \tau$ holds with probability $\frac{\tau}{t^*}$, the success probability is at least
				\begin{align*}
					\frac{t^* - t_M}{t^*} \frac{\tau}{t^*} \ge \frac{\tau - t_M}{\tau} \frac{\tau}{t^*} = \frac{\tau - t_M}{t^*}.
				\end{align*}
			\end{itemize}
			Therefore, the success probability in the case where $t_M < t^*$ and $\hat{t} < t^*$ is at least
			\begin{align*}
				\int_\tau^1 t^* \left[ \int_0^\tau (m-1) {(1 - t_M)}^{m-2} \frac{\tau - t_M}{t^*} \mathrm{d}t_M \right] \mathrm{d}t^*.
			\end{align*}

			In total, the success probability is at least
			\begin{align*}
				&\frac{{(1-\tau)}^{m+1}}{m+1} + \int_\tau^1 \left[ (1-t^*) \left\{ 1 - {(1 - t^*)}^{m-1} \right\} \frac{\tau}{t^*} + \int_0^\tau (m-1) {(1 - t_M)}^{m-2} (\tau - t_M) \mathrm{d}t_M \right] \mathrm{d}t^*\\
				&= \frac{{(1-\tau)}^{m+1}}{m+1} + \tau \ln \frac{1}{\tau} - \tau \int_\tau^1 \frac{{(1-t^*)}^m}{t^*} \mathrm{d}t^* - \frac{1-\tau}{m} \left\{ 1 - {(1 - \tau)}^{m} \right\} \\
				&= \frac{{(1-\tau)}^{m+1}}{m+1} - \tau \sum_{k=1}^m {m \choose k} {(-1)}^k \frac{1 - \tau^k}{k} - \frac{1-\tau}{m} \left\{ 1 - {(1 - \tau)}^{m} \right\}.
			\end{align*}

		\item[(vi)] In the case where $\hat{\imath} \neq i^*$, $\hat{\imath} \not\in M$, and $i^* \not\in M$.

			Since both $\hat{\imath}$ and $i^*$ are not in $M$, the value of $\hat{\imath}$ is a good approximation of the value of $i^*$.
			That is,
			\begin{equation}\label{eq:nonempty}
				v(\hat{\imath}) \ge \frac{1}{1+\theta} \hat{v}(\hat{\imath}) \ge \frac{1}{1+\theta} \hat{v}(i^*) \ge \frac{1-\theta}{1+\theta} \hat{v}(i^*).
			\end{equation}
			The algorithm hires $\hat{\imath}$ if $\hat{t} < t_M$.
			This occurs when $\hat{\imath}$ is the first among $M \cup \{\hat{\imath}\}$, which happens with probability $\frac{1}{m+1}$.

			Next, we consider the probability that the algorithm hires $i^*$.
			Since we are now considering the case where $i^* \not\in M$, it is necessary that at least one candidate in $M$ appears before $i^*$.
			Let $i_b$ be the best candidate in $[0, t^*)$ and $t_b$ the time when $i_b$ appears.
			If $\hat{\imath}$ appears after $t^*$, a sufficient condition is $t_b \le \tau$.
			This occurs with probability $\frac{\tau}{t^*}$.
			If $\hat{\imath}$ appears before $t^*$, in the same way as case (v), we can observe the probability that the algorithm hires $i^*$ is at least
			\begin{align*}
				\int_\tau^1 t^* \left[ \int_0^\tau m {(1 - t_M)}^{m-1} \frac{\tau - t_M}{t^*} \mathrm{d}t_M \right] \mathrm{d}t^*,
			\end{align*}
			where we use the fact that the probability density function of $t_M$ is $f(t_M) = m(1-t_M)^{m-1}$.

			In total, the probability that the algorithm hires $i^*$ is at least
			\begin{align*}
				&\int_\tau^1 \left[ (1-t^*)\{1-{(1-t^*)}^m\}\frac{\tau}{t^*} + t^* \int_0^\tau m {(1 - t_M)}^{m-1} \frac{\tau - t_M}{t^*} \mathrm{d}t_M \right] \mathrm{d}t^*\\
				&=  - \tau \sum_{k=1}^{m+1} {m+1 \choose k} {(-1)}^k \frac{1 - \tau^k}{k} - \frac{1-\tau}{m+1} \left\{ 1 - {(1 - \tau)}^{m+1} \right\}.
			\end{align*}

			Finally, by taking the summation of the contributions by $i^*$ and $\hat{\imath}$, the competitive ratio can be bounded from below by
			\begingroup
			\allowdisplaybreaks[0]
			\begin{align*}
				&\frac{1}{m+1} \frac{1-\theta}{1+\theta}  - \tau \sum_{k=1}^{m+1} {m+1 \choose k} {(-1)}^k \frac{1 - \tau^k}{k} - \frac{1-\tau}{m+1} \left\{ 1 - {(1 - \tau)}^{m+1} \right\}.
			\end{align*}
			\endgroup
	\end{enumerate}

	Now we have obtained the lower bounds on the competitive ratios for all six cases.
	To find approximately optimal $\theta$ and $\tau$, we apply the grid search and numerically compute the minimum competitive ratio among these six cases for each choice of $\theta$ and $\tau$.
	Since the competitive ratios for (ii), (iv), (v), and (vi) depend on $m$, which varies with the problem instance, we compute them only for $m \le 50$.
	As a result, we obtain $\theta = 0.646$ and $\tau = 0.313$.

	Here we show that the competitive ratio is actually at least $0.215$ for the obtained parameters $\theta = 0.646$ and $\tau = 0.313$.
	For the cases (i) and (iii), the competitive ratio is at least $\tau \ln \frac{1}{\tau} \ge 0.363$, which is larger than $0.215$.
	For the case (iv), we can check that the competitive ratio is $\tau(1-\tau) \ge 0.215$ for $m=1$ and it is monotonically non-decreasing in $m$.
	The competitive ratio for the case (ii) is at least its second term, which is equal to the competitive ratio for the case (iv).
	For the case (v), the competitive ratio is lower-bounded by 
	\begin{equation*}
		\tau \ln \frac{1}{\tau} - \tau \int_\tau^1 \frac{{(1-t^*)}^m}{t^*} \mathrm{d}t^* - \frac{1-\tau}{m},
	\end{equation*}
	which is monotonically non-decreasing in $m$.
	Since the numerical computation shows that this lower bound exceeds $0.215$ for $m=6$ and the competitive ratio is at least $0.215$ for $m \le 6$, the competitive ratio is at least $0.215$ for all $m \in \bbZ_{\ge 1}$.
	A similar argument applies to the case (vi).

\end{proof}

\subsection{Comparison with \citet{AGKK20}}\label{sec:agkk}

Recently, a paper that deals with problems similar to ours has been published~\citep{AGKK20}.
In this section, we provide a comparison of our study and the results by \citet{AGKK20} for the classical secretary problem with predictions.
In their setting, only the prediction of the maximum value is available, while in our setting, the predictions of all the candidates' values are available.

The result by \citet{AGKK20} on the classical secretary algorithm with predictions can be summarized as follows.
Let $p^*$ be the prediction of the maximum value $\max_{i \in N} v(i)$.
While we bound the competitive ratio in terms of the largest error of all predicted values, $\epsilon = \max_{i \in N} \left| 1 - \frac{\hat{v}(i)}{v(i)} \right|$, the previous study bounded the competitive ratio in terms of the error of $p^*$, which they denoted by $\eta = | p^* - \max_{i \in N} v(i) |$.
Their algorithm has parameters $0 \le \lambda \le p^*$ and $c \ge 1$.
They proved that its competitive ratio is at least
\begin{equation*}
	\begin{cases}
		\max \left\{ \frac{1}{c \mathrm{e}} , \left[ f(c) \left( \max \left\{ 1 - \frac{\lambda + \eta}{\max_{i \in N} v(i)} , 0 \right\} \right) \right]\right\} & \text{if } 0 \le \eta < \lambda\\
		\frac{1}{c \mathrm{e}} & \text{if } \eta \ge \lambda,
	\end{cases}
\end{equation*}
where $f(c)$ is defined in terms of the two branches $W_0$ and $W_{-1}$ of the Lambert W-function such that $f(c) = \exp(W_0(-1/(c\mathrm{e}))) - \exp(W_{-1}(-1/(c\mathrm{e})))$.

We plot this ratio with our bound for different parameters $c$ and $\lambda$.
We set $\max_{i \in N} v(i) = 1$ so that $\lambda$ is scaled as $0 \le \lambda \le 1$.
The horizontal axis represents the multiplicative error of the predictions, $\epsilon$ or $\eta / \max_{i \in N} v(i)$, and the vertical axis represents the competitive ratio.
Note that $\epsilon \ge \eta / \max_{i \in N} v(i)$ always holds if we set $p^* = \hat{v}(\hat{\imath})$, and therefore, it is easier to bound the competitive ratio in terms of $\epsilon$ than in terms of $\eta / \max_{i \in N} v(i)$.
In \Cref{sec:epsilon}, we will show that the same competitive ratio bound holds even if we replace the definition of $\epsilon$ with $\epsilon = \max \left\{ 1 - \frac{\hat{v}(\hat{\imath})}{v(i^*)}, \frac{\hat{v}(\hat{\imath})}{v(\hat{\imath})} - 1 \right\}$, which is always no larger than the original $\epsilon$, but it is still no less than $\eta / \max_{i \in N} v(i) = \max\left\{ 1 - \frac{\hat{v}(\hat{\imath})}{v(i^*)}, \frac{\hat{v}(\hat{\imath})}{v(i^*)} - 1 \right\}$.

If we set $c = 1$, their algorithm coincides with the classical algorithm.
If we set $c = \frac{1}{0.215\mathrm{e}}$, their worst-case competitive ratio coincides with ours, but their ratio for small prediction errors is worse than ours for any $\lambda$.
If we set $c = 3$, their ratio for small prediction errors approaches ours for each $\lambda$, but their worst-case ratio is worse than ours.
Since $f(c)$ converges to $1$ when $c$ goes to $\infty$, their competitive ratio is always smaller than $1 - \frac{2 \eta}{\max_{i \in N} v(i)}$.

\begin{figure}[t!]
	\centering
	\subfloat[][$c=1$]{\includegraphics[width=.32\hsize]{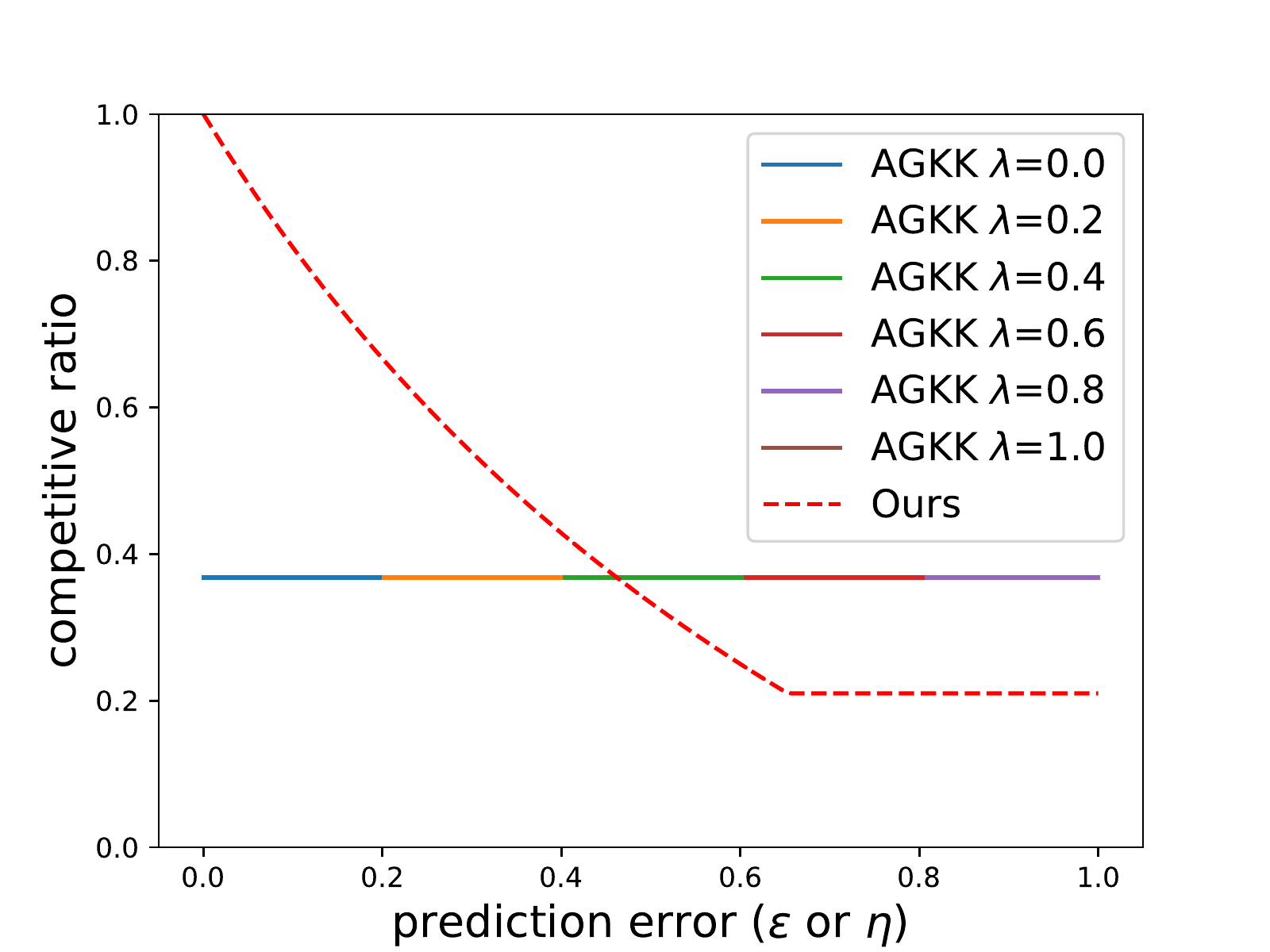}}
	\subfloat[][$c=\frac{1}{0.215\mathrm{e}}$]{\includegraphics[width=.32\hsize]{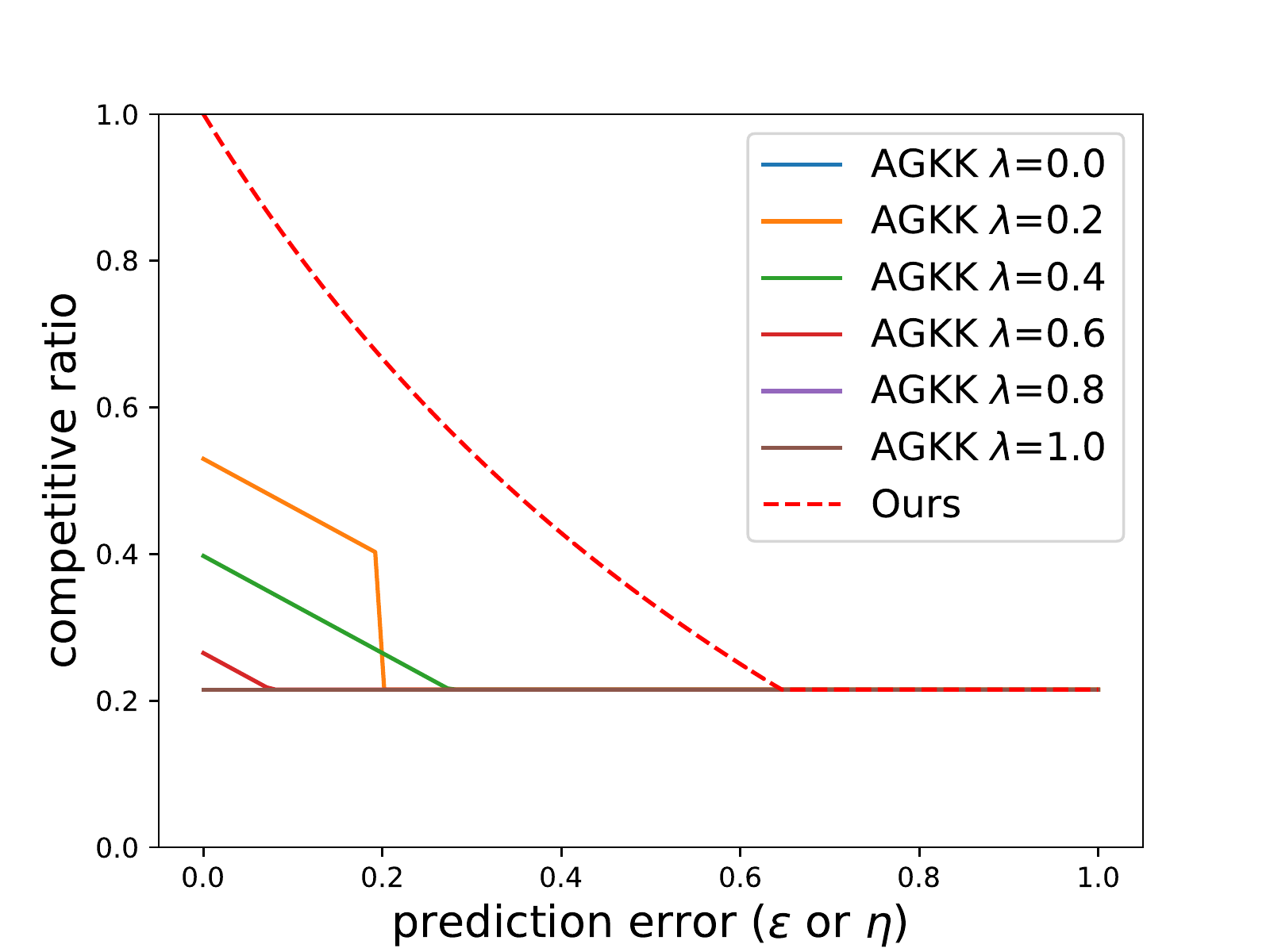}}
	\subfloat[][$c=3$]{\includegraphics[width=.32\hsize]{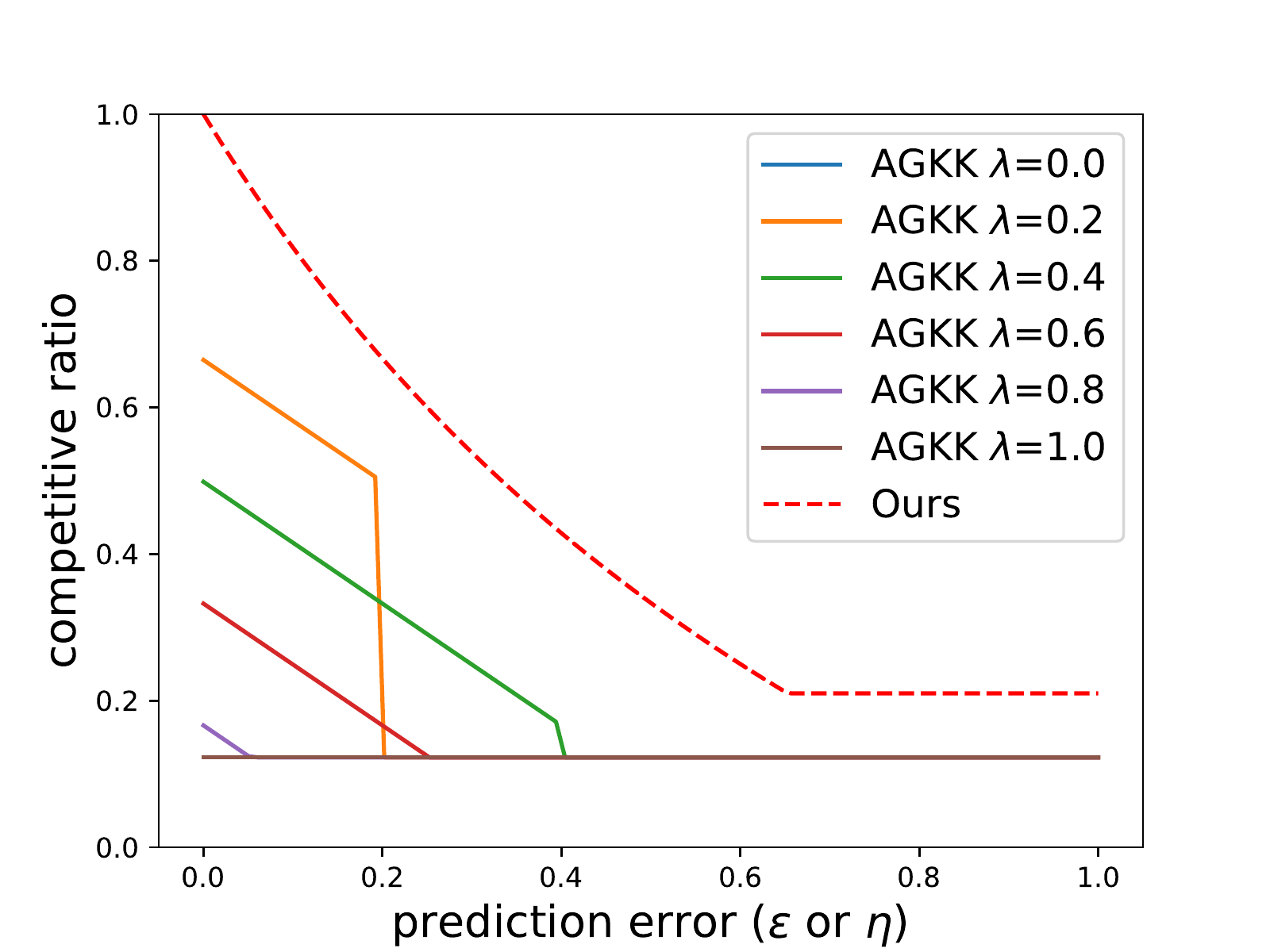}}
	\caption{Comparison of the competitive ratios.\label{fig:comparison}}
\end{figure}

\section{Hardness results for the classical secretary problem}\label{sec:hardness}

In this section, we show hardness results for the classical secretary problem with predictions.

We construct a set of problem instances such that there exists no algorithm that achieves a good competitive ratio for all of them.
We define the set of problem instances that we will use in the proofs as follows.
Let $n$ be the number of candidates, and the set of candidates is represented by $N = \{1,\ldots,n\}$.
The set $\calI_n$ consists of $2^{n-1}$ problem instances, each of which $I_{n,E} \in \calI_n$ is indexed by a subset $E$ of the candidates except $1$, i.e., $E \subseteq \{2,\ldots,n\}$.

Each problem instance is defined as follows.
For each problem instance, the predicted value of candidate $1$ is $L$, where $L$ is a very large value, and it is always correct, i.e., $\hat{v}(1) = v(1) = L$.
For all the problem instances, every other candidate $i \in \{2,\ldots,n\}$ has the same predicted value $\hat{v}(i) = 1$, but their actual values are different depending on the problem instances.
If $i \in E$, then its actual value is $v(i) = L^i$, and if $i \in \{2,\ldots,n\} \setminus E$, then its actual value is $v(i) = 1$.
Intuitively, $E$ represents the set of candidates whose predictions are erroneous.

For any $i \in \{2,\ldots,n\}$, we write candidate $i$ as $i$ when its prediction is accurate and $\err{i}$ when its prediction is inaccurate.
For example, we write $2$ or $3$ when their predictions are accurate, and $\err{2}$ or $\err{3}$ when their predictions are inaccurate.
Here, we use the observation that if all the predicted values and a prefix sequence of actual values of candidates are identical, every algorithm cannot distinguish between problem instances and performs in the same way on that prefix.
Note that the predicted values are identical for all $I_{n,E} \in \calI_n$.
For example, for different problem instances $\{1,\err{2},3\}$ and $\{1,\err{2},\err{3}\}$, if the candidates $1,\err{2}$ arrive in this order, any algorithm cannot distinguish between these two problem instances, and hires $\err{2}$ with the same probability.
Hiring $\err{2}$ is optimal for $\{1,\err{2},3\}$ but not optimal for $\{1,\err{2},\err{3}\}$.

Here, we consider an algorithm that achieves $\max\{1-\rmO(\epsilon), \alpha\}$-competitiveness for some constant $\alpha$, where $\epsilon$ is the largest multiplicative error of predictions.
Since $L$ can be very large, if all the predictions are accurate, i.e., $\epsilon = 0$, an algorithm is required to hire the optimal candidate $1$.
When $1$ appears and all candidates that have appeared so far have accurate predictions, it is possible that all the predictions are accurate, and therefore, the algorithm is required to hire $1$.

If some predictions are not accurate, that is, $E \neq \emptyset$, then $v(i) = L^i$ for any $i \in E \cup \{1\}$ and $v(i) = 1$ for any $i \not\in E$.
In this case, since $\epsilon$ is very large, an algorithms is required to be $\alpha$-competitive.
When we consider a very large $L$, the competitive ratio is arbitrarily close to the probability that the algorithm hires the candidate with the largest index in $E$, and we can ignore the contribution of the other candidates to the competitive ratio.

We provide an upper bound on $\alpha$ such that a deterministic or randomized algorithm is $\max\{1-\rmO(\epsilon), \alpha\}$-competitive for all these problem instances at the same time.

\subsection{A hardness result for deterministic algorithms}

First, we show a hardness result for deterministic algorithms.
Formally, we say an algorithm is deterministic if, given a prefix sequence of candidates observed so far, it deterministically hires or does not hire the next candidate.
Note that we consider deterministic algorithms in the random order model, not the continuous-time model.
Although \Cref{alg:classical} is deterministic in the continuous-time model, we need randomization to adapt it to the random order model, and therefore, it is a randomized algorithm in the random order model.
We show the following.
\begin{restatable}{theorem}{thmhardnessdeterministic}\label{thm:hardness-deterministic}
	For any constant $C > 0$, there exists no deterministic algorithm whose competitive ratio is better than $\max\{1 - C \epsilon, 0.25 \}$ for the classical secretary problem with predictions.
\end{restatable}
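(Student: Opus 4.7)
The plan is to use the family $\mathcal{I}_n$ and the indistinguishability observation from the excerpt to force any deterministic algorithm into a trade-off that forbids $\alpha > 0.25$.

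First I would apply the $(1 - C\epsilon)$-competitiveness requirement to the fully-accurate instance $I_{n, \emptyset}$ (where $\epsilon = 0$) and take $L \to \infty$. The algorithm must hire candidate $1$ in every one of the $n!$ orderings; otherwise, missing $1$ even once drops the expected value strictly below $L$ and violates $1$-competitiveness. Equivalently, on any history of $(i, 1)$ pairs with $i \in \{2, \ldots, n\}$ the algorithm must skip, and on its extension by $(1, L)$ it must hire. By the indistinguishability argument preceding the theorem, these forced moves propagate to every instance $I_{n, E}$ whenever the observed prefix consists only of value-$1$ candidates; in particular, whenever candidate $1$ arrives before every element of $E$, the algorithm hires $1$ and therefore cannot hire $\max E$. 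Taking $L \to \infty$, the competitive ratio on $I_{n, E}$ (for $E \neq \emptyset$) reduces to $a_E := \Pr_\sigma[\text{algorithm hires } \max E]$, and the $0.25$-competitiveness requirement becomes $a_E \geq \alpha$.

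The core step is a case analysis over the algorithm's binary decisions. I would specialize to $n = 4$ and the sub-family $\{I_{4, \{2\}}, I_{4, \{3\}}, I_{4, \{4\}}, I_{4, \{2, 3, 4\}}\}$, and case on whether the algorithm commits to \textit{hire the first observed large candidate}---that is, on the values $\phi_j(\emptyset) \in \{0, 1\}$ of the decision taken when $(j, L^j)$ is seen after an empty prefix, for $j \in \{2, 3, 4\}$. If $\phi_j(\emptyset) = 1$ for every $j$, then each $a_{\{j\}} \geq 1/2$ but $a_{\{2, 3, 4\}} = 1/4$, because candidate $4$ is the first element of $\{1, 2, 3, 4\}$ to arrive only with probability $1/4$ and the algorithm pre-commits to whichever large it sees first. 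If $\phi_j(\emptyset) = 0$ for some $j$, then by the indistinguishability-enforced coupling---the decision on $(j, L^j)$ with a given size-$0$ prefix is identical in $I_{4, \{j\}}$ and in $I_{4, \{2, 3, 4\}}$---the algorithm hires $j$ in $I_{4, \{j\}}$ only at positions $\geq 2$ with $j$ before $1$, and a direct enumeration gives $a_{\{j\}} \leq 2/12 + 1/12 = 1/4$. Intermediate mixes of $\phi_j(\emptyset)$ values across $j$ satisfy the same trade-off, so in every case $\min_{E \neq \emptyset} a_E \leq 1/4$, which establishes the theorem.

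The main obstacle is exploiting the deterministic structure rather than a general LP relaxation: the relaxation corresponds to randomized algorithms and only yields the weaker $0.348$-type bound used later for the randomized hardness result. The case analysis must therefore use the integrality $\phi_j(\cdot) \in \{0, 1\}$ critically---if intermediate values were allowed, the algorithm could interpolate between the single-erroneous-element and the all-erroneous instance requirements. Formally accounting for every free decision along the algorithm's history tree (including those after the first large candidate has been observed, where the algorithm retains freedom to hire $1$, another member of $E$, or wait for $\max E$) and combining their contributions across the four selected instances is the main technical step.
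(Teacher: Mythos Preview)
Your proposal is correct and follows essentially the same route as the paper: both arguments specialize to $n=4$, use the forcing from $I_{4,\emptyset}$ to pin down the algorithm's behavior on accurate prefixes, and then pit the three singleton instances $I_{4,\{j\}}$ against $I_{4,\{2,3,4\}}$ via the binary decision of whether to hire a large candidate seen in first position. The paper phrases the counting as a contradiction (needing $\ge 7$ out of $24$ successes on $I_{4,\{j\}}$ forces $\phi_j(\emptyset)=1$, whence $a_{\{2,3,4\}}=1/4$), whereas you case directly on the $\phi_j(\emptyset)$ values, but the underlying enumeration---at most $4+2=6$ favorable orderings when $\phi_j(\emptyset)=0$---is identical; one small slip is the aside ``$a_{\{j\}}\ge 1/2$'', which should read $a_{\{j\}}\ge 1/4$ (only the six first-position orderings are guaranteed), though this remark is not load-bearing.
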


\begin{proof}
	We consider the set $\calI_n$ of problem instances defined above in the case of $n = 4$.
	To derive a contradiction, assume there exists an algorithm whose competitive ratio is better than $\max\{1 - C \epsilon, 0.25 \}$.
	If we assume that $L$ is very large compared to $C$, then this algorithm hires the optimal candidate with probability $1$ for $I_{4,\emptyset}$ and with probability larger than $0.25$ for every $I_{4,E} \in \calI_n$ with $E \neq \emptyset$.

	First, we consider $I_{4,\{2\}}$, i.e., the candidates are $\{1, \err{2}, 3, 4\}$.
	There are $24$ orderings of these four candidates.
	Since its competitive ratio is better than $0.25$, the algorithm successfully hires $\err{2}$ for at least $7$ orderings of them.
	As discussed above, since the algorithm achieves $(1-C\epsilon)$-competitiveness for $I_{4,\emptyset}$, the algorithm hires $1$ if $1$ appears before $\err{2}$.
	Hence, if $1$ appears before $\err{2}$, the algorithm fails to hire $\err{2}$.

	There are $12$ orderings out of $24$ orderings in which $1$ appears before $\err{2}$.
	Of the remaining 12 orderings, $\err{2}$ appears first in $6$ orderings, and $\err{2}$ appears second or later in the other $6$ orderings.
	Since the algorithm is deterministic, it hires $\err{2}$ that appears as the first candidate with probability $0$ or $1$.
	Considering that the algorithm hires $\err{2}$ for at least $7$ orderings, the algorithm must hire $\err{2}$ that appears as the first candidate.
	Therefore, even for other problem instances, if $\err{2}$ appears as the first candidate, the algorithm must hire $\err{2}$.

	In the same way, we can show that the algorithm hires $\err{3}$ and $\err{4}$ if they appear as the first candidate.

	Now we consider $I_{4,\{2, 3, 4\}}$, i.e., the candidates are $\{1, \err{2}, \err{3}, \err{4}\}$.
	For this instance, the algorithm hires the first candidate if it is any of $\{1, \err{2}, \err{3}, \err{4}\}$.
	Therefore, the probability that the algorithm hires the optimal candidate $\err{4}$ is $0.25$, which derives a contradiction.

\end{proof}

\subsection{A hardness result for randomized algorithms}

We show that any randomized algorithm cannot achieve better than $\max \{1-\rmO(\epsilon),0.348\}$-competitiveness.
Unlike the deterministic case, we formulate a linear program whose optimal value matches the maximum competitive ratio that can be achieved by a randomized algorithm and solve it by using an LP solver.
Since there exists a $1/\rme \approx 0.367$-competitive algorithm for the classical secretary problem without predictions, we can say that it is impossible to simultaneously achieve both $1/\rme$-competitiveness of the classical secretary algorithm and $(1-\rmO(\epsilon))$-competitiveness of the algorithm that hires the best candidate in terms of the predicted values.

\begin{restatable}{theorem}{thmhardnessrandomized}\label{thm:hardness-randomized}
	For any constant $C > 0$, there exists no randomized algorithm whose competitive ratio is better than $\max\{1 - C \epsilon ,0.348\}$ for the classical secretary problem with predictions.
\end{restatable}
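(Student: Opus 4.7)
The plan is to reuse the same instance family $\calI_n = \{I_{n,E} : E \subseteq \{2,\ldots,n\}\}$ introduced for \Cref{thm:hardness-deterministic} and to derive the bound via a linear programming relaxation. Since the target constant $0.348$ lies strictly below the classical $1/\rme$ threshold, the argument must go beyond the combinatorial counting used in the deterministic case; in the randomized setting we package the indistinguishability constraints into a finite linear program whose optimum we verify numerically with an LP solver.

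The first step is to parameterize a randomized algorithm canonically. By the principle of deferred decisions, any such algorithm is described by assigning, to every \emph{observable history} — a sequence of (candidate index, actual value) pairs — a probability of hiring the most recently interviewed candidate, conditioned on not having hired earlier. Because $\hat v$ is identical across the instances in $\calI_n$, any two instances yielding the same observable prefix of length $j$ must induce the same hiring probability at step $j$; this is precisely the indistinguishability principle that already powered the deterministic bound. I would then introduce LP variables $y_{\pi,E,j}$ equal to the unconditional probability of hiring the $j$-th candidate of ordering $\pi$ under instance $I_{n,E}$, subject to $y_{\pi,E,j}\ge 0$ and $\sum_j y_{\pi,E,j}\le 1$, together with linear identities encoding that conditional hiring probabilities agree on matching prefixes across different $(\pi,E)$ pairs.

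Next I would incorporate the competitiveness constraints by taking $L\to\infty$. On $I_{n,\emptyset}$ every discrepancy-free prediction is correct and candidate $1$ dominates the optimum, so $(1-C\epsilon)$-competitiveness at $\epsilon=0$ forces the algorithm to hire candidate $1$ with probability $1$ on every ordering of $I_{n,\emptyset}$ — a set of linear equalities on the $y$ variables. For each $E\ne\emptyset$ the competitive ratio on $I_{n,E}$ converges to the probability of hiring $i_E^{*}=\max E$, since $v(i_E^{*})=L^{i_E^{*}}$ dwarfs the remaining values. The LP then maximizes $\alpha$ subject to the probability of hiring $i_E^{*}$ on $I_{n,E}$ being at least $\alpha$ for every nonempty $E$, plus the feasibility, indistinguishability, and $I_{n,\emptyset}$ constraints. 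Solving this LP for a moderate $n$ (I expect a value around $n=6$ or $7$ to suffice) should yield an optimum strictly below $0.348$, contradicting the assumption that some randomized algorithm exceeds that threshold and hence establishing the theorem.

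The main obstacle is justifying that the LP optimum is a genuine upper bound on what any randomized algorithm can attain. This requires: (i) showing that the indistinguishability identities, together with the per-ordering probability constraints, exactly characterize the set of achievable $y$ vectors — so that one is not accidentally enlarging the feasible region; (ii) handling the $L\to\infty$ limit carefully so that the $(1-C\epsilon)$-competitiveness of $I_{n,\emptyset}$ turns into the hard equation ``hire $1$ with probability one'' rather than a slack inequality; and (iii) balancing the size of the LP — exponential in $n$ — against the need for $n$ to be large enough that the optimum drops below $0.348$, so that the numerical computation is both trustworthy and tractable.
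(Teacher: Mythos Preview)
Your proposal follows essentially the same approach as the paper: formulate an LP over the instance family $\calI_n$ whose feasible region captures all randomized algorithms (via indistinguishability of matching observable prefixes together with the constraint that candidate $1$ must be hired whenever the prefix is error-free), and then solve it numerically for $n=7$ to obtain an optimum below $0.348$. The only cosmetic difference is that the paper indexes its LP variables directly by observable prefixes $\sigma$ (sequences over $\{1,\ldots,n\}\cup\{\err{2},\ldots,\err{n}\}$), which bakes the indistinguishability identities into the variable definition and yields a more compact formulation than your $(\pi,E,j)$-indexed variables with explicit equality constraints.
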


\begin{proof}
	As in the deterministic case, we consider the set $\calI_n$ of problem instances defined above and use the same notation.
	We provide a linear programming formulation whose optimal value is the competitive ratio achieved by an optimal randomized algorithm.

	We define LP variables as follows.
	Let $\Sigma$ be the set of all possible partial permutations.
	Formally, $\Sigma_k = \{ (i_1, \ldots, i_k) \mid i_1, \ldots, i_k \in \{1,\ldots,n\} \cup \{\err{2},\ldots,\err{n}\} ~\text{and}~ | \{i_1,\ldots,i_k\} \cap \{i, \err{i} \} | \le 1 ~ \text{for each $i = 2,\ldots,n$} \}$ for each $k \in [n]$ and $\Sigma = \bigcup_{k=1}^n \Sigma_k$.
	For each $\sigma \in \Sigma$, a variable $x(\sigma)$ represents the probability that an algorithm observes $\sigma$ as a prefix and hires the last candidate of $\sigma$.
	For example, $x(2\err{4}1\err{5})$ represents the probability that the algorithm observes the first four candidates $2, \err{4}, 1, \err{5}$ in this order and hires $\err{5}$.

	First, by induction, we show that the probability that the algorithm observes $\sigma$ is at most
	\begin{equation*}
		\frac{(n-|\sigma|)!}{n!} - \sum_{i=1}^{|\sigma|-1} \frac{(n-|\sigma|)!}{(n-i)!} x(\sigma_i)
	\end{equation*}
	for each $\sigma \in \Sigma$, where $\sigma_i$ represents the sequence of the first $i$ candidates of $\sigma$.
	If $|\sigma| = 1$, the length-$1$ sequence $\sigma$ realizes with probability $1/n$, which proves the base case.
	Suppose the induction hypothesis holds for each $\sigma \in \Sigma$ such that $|\sigma| \le \ell-1$.
	Let $\sigma \in \Sigma$ be any partial permutation of length $\ell$ and $\sigma_{\ell-1}$ be the first $\ell-1$ candidates of $\sigma$.
	By the induction hypothesis, the algorithm observes $\sigma_{\ell-1}$ with probability at most
	\begin{equation*}
		\frac{(n-\ell+1)!}{n!} - \sum_{i=1}^{\ell-2} \frac{(n-\ell+1)!}{(n-i)!} x(\sigma_i).
	\end{equation*}
	The algorithm observes $\sigma_{\ell-1}$ and hires the last candidate in $\sigma_{\ell-1}$ with probability $x(\sigma_{\ell-1})$.
	Furthermore, if the algorithm ignores the last candidate of $\sigma_{\ell-1}$, the next candidate is selected from the $n-\ell+1$ remaining candidates uniformly at random.
	Therefore, the algorithm ignores the last candidate of $\sigma_{\ell-1}$ and observes $\sigma$ with probability
	\begin{equation*}
		\frac{1}{n-\ell+1} \left\{ \frac{(n-\ell+1)!}{n!} - \sum_{i=1}^{\ell-2} \frac{(n-\ell+1)!}{(n-i)!} x(\sigma_i) - x(\sigma_{\ell-1}) \right\}
		= \frac{(n-\ell)!}{n!} - \sum_{i=1}^{\ell-1} \frac{(n-\ell)!}{(n-i)!} x(\sigma_i),
	\end{equation*}
	which proves the induction step.
	Therefore, the probability $x(\sigma)$ that the algorithm observes $\sigma$ and hires the last candidate of $\sigma$ must satisfy
	\begin{equation*}
		x(\sigma) \le \frac{(n-|\sigma|)!}{n!} - \sum_{i=1}^{|\sigma|-1} \frac{(n-|\sigma|)!}{(n-i)!} x(\sigma_i)
	\end{equation*}
	for each $\sigma \in \Sigma$.

	Next, we consider constraints on $x(\sigma)$ for $\sigma$ that contains only $1,\ldots,n$ and ends with $1$.
	If all the predictions are accurate, i.e., $\epsilon = 0$, an algorithm is required to hire the optimal candidate $1$.
	When $1$ appears and all candidates that appeared so far have accurate predictions, it is possible that all the predictions are accurate, and therefore, the algorithm is required to hire $1$.
	In this case, if the problem instance has an inaccurate candidate that appears after $1$, hiring $1$ is not optimal.
	Let $\Sigma_{\text{$1$ must be hired}} = \{ (i_1,\ldots,i_{k-1}, 1) \in \Sigma \mid i_1,\ldots,i_{k-1} \in \{2,\ldots,n\} \}$ be the set of all sequences in which the algorithm must hire $1$.
	Since all candidates in $\{2,\ldots,n\}$ are not optimal and partial permutation $\sigma$ realizes with probability $\frac{(n-|\sigma|)!}{n!}$, we can add a constraint $x(\sigma) = \frac{(n-|\sigma|)!}{n!}$ for all $\sigma \in \Sigma_{\text{$1$ must be hired}}$.

	For each problem instance $I_{n,E} \in \calI_n$, the probability that the algorithm successfully hires an optimal candidate can be represented as follows.
	Let $\Sigma_E \subseteq \Sigma$ be the set of partial permutations that can realize when the erroneous candidates are $E$ such that the last candidate of $\sigma$ is the optimal candidate.
	The probability that the algorithm successfully hires an optimal candidate for problem instance $I_{n,E}$ is $\sum_{\sigma \in \Sigma_E} x(\sigma)$.
	Let $z$ be the competitive ratio of the algorithm.
	Then $z$ is the minimum of these competitive ratios, and therefore, our objective is to maximize $z$ under constraints $\sum_{\sigma \in \Sigma_E} x(\sigma) \ge z$ for each $E \subseteq \{2,\ldots,n\}$.

	Therefore, we can formulate an LP that the competitive ratio of any randomized algorithm must satisfy as follows.
	\begin{alignat*}{4}
	&\text{maximize}   &\ \ &z \\
	&\text{subject to} &    & \sum_{\sigma \in \Sigma_E} x(\sigma) \ge z  & \quad & (E \subseteq \{2,\ldots,n\}), \\
	&                  &    & 0 \le x(\sigma) \le \frac{(n-|\sigma|)!}{n!} - \sum_{i=1}^{|\sigma|-1} \frac{(n-|\sigma|)!}{(n-i)!} x(\sigma_i) & \quad & (\sigma  \in \Sigma),\\
	&                  &    & x(\sigma) = \frac{(n-|\sigma|)!}{n!} & \quad & (\sigma \in \Sigma_{\text{$1$ must be hired}}).
	\end{alignat*}

	On the other hand, for each solution that satisfies these constraints, we can construct an algorithm that observes $\sigma$ and hires the last candidate of $\sigma$ with probability $x(\sigma)$.
	Therefore, the optimal value of this LP coincides with the optimal competitive ratio of a randomized algorithm that always hires $1$ if all the predictions are accurate.

	We solve this LP by using an LP solver.
	Since the number of LP variables is at least exponentially large in $n$, the largest instance that we can solve is the case of $n = 7$.
	The optimal value for the case of $n = 7$ is lower than $0.348$.

\end{proof}

\section{Multiple-choice secretary problem}\label{sec:multiple}

In this section, we propose an algorithm for the multiple-choice secretary problem with predictions, wherein we hire at most $k$ candidates to maximize the sum of their actual values.
The basic principle is the same as that for the classical secretary problem, wherein we combine a strategy that follows predictions and a strategy that works without predictions.
As a strategy that works without predictions, we use Kleinberg's algorithm~\citep{Kle05}.
This algorithm first applies Kleinberg's algorithm with capacity $\ell \coloneqq \lfloor \frac{k}{2} \rfloor$ to the candidates before time $\frac{1}{2}$ recursively.
Next, from the candidates that appear after time $\frac{1}{2}$, it hires candidates whose values exceed $v(i_\ell)$ until the capacity is full, where $i_\ell$ is the $\ell$th largest candidate before time $\frac{1}{2}$.
It has been proved that this algorithm is $(1 - \frac{5}{\sqrt{k}})$-competitive for the multiple-choice secretary problem and there exists no $(1 - o(\frac{1}{\sqrt{k}}))$-competitive algorithm \citep{Kle05}.

Our algorithm, which we refer to as \textit{learned Kleinberg}, first follows the strategy that hires the top $k$ predicted values.
If a candidate appears whose multiplicative error is larger than $\theta$, our algorithm hires it and then applies Kleinberg's algorithm \citep{Kle05} with the remaining capacity to the candidates that appear after this time.
A detailed description is given in \Cref{alg:multiple}.

\begin{algorithm}
	\caption{Learned Kleinberg}\label{alg:multiple}
	\begin{algorithmic}[1]
		\REQUIRE Capacity $k$, threshold $\theta$, predictions $\hat{v} \colon N \to \bbR$.
		\STATE $\hat{S} \in \argmax_{S \subseteq N \colon |S| \le k} \sum_{i \in S} \hat{v}(i)$ and $S \gets \emptyset$.
		\FOR{each candidate $i \in N$ in random order}
			\IF{$\displaystyle \left| 1 - \frac{\hat{v}(i)}{v(i)} \right| > \theta$}\label{ln:multiple-switch}
				\STATE Let $k' \gets k - |S| - 1$.
				\STATE Apply Kleinberg's algorithm with capacity $k'$ to the remaining candidates and obtain $T$.
				\STATE \textbf{return} $S \cup \{i\} \cup T$.
			\ENDIF
			\IF{$i \in \hat{S}$}
				\STATE $S \gets S \cup \{i\}$.
				\IF{$|S| = k$}
					\STATE \textbf{return} $S$.
				\ENDIF
			\ENDIF
		\ENDFOR
		\STATE \textbf{return} $S$.
	\end{algorithmic}
\end{algorithm}

Let $\hat{S}$ be the set of candidates with the top $k$ predicted values and $S^*$ be an optimal solution.
Let $M = \left\{ i \in N \,\middle|\, \left| 1 - \frac{\hat{v}(i)}{v(i)} \right| > \theta \right\}$ be the candidates that significantly deviate from the prediction.
The proof is divided into three parts.
(i) If $M = \emptyset$, then the algorithm hires $\hat{S}$, and in this case, $\hat{S}$ is a good approximation of $S^*$.
(ii) If $|M|$ is large, the algorithm switches to Kleinberg's algorithm in an early phase.
In this case, the output of Kleinberg's algorithm is a good approximation of $S^*$.
(iii) If $|M|$ is small but non-zero, a more involved analysis is required.

\subsection{The case where $M = \emptyset$}

First, we consider the case where $M = \emptyset$.
In this case, since \Cref{alg:multiple} hires the top $k$ predictions $\hat{S}$ and all the predictions are accurate, the output is a good approximation of an optimal solution $S^*$.

\begin{restatable}{lemma}{lemempty}\label{lem:empty}
	If $M = \emptyset$, then \Cref{alg:multiple} is $(1 - 2 \epsilon)$-competitive.
\end{restatable}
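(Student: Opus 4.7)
The plan is to show that when $M = \emptyset$ the algorithm behaves deterministically and hires exactly the set $\hat{S}$ of top $k$ predicted candidates, then bound $v(\hat{S})$ against $v(S^*)$ using the multiplicative error guarantee summed across candidates.

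First I would observe that the switching condition on line~\ref{ln:multiple-switch} is the event $\left| 1 - \hat{v}(i)/v(i) \right| > \theta$, which defines exactly $i \in M$. Since $M = \emptyset$, the algorithm never enters Kleinberg's mode, and it iterates through all candidates, adding each $i \in \hat{S}$ to $S$ when encountered. Because $|\hat{S}| \le k$, the cap $|S| = k$ is respected, so the algorithm terminates with output $S = \hat{S}$.

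Next I would upgrade the per-candidate error bound to set sums. By definition of $\epsilon$, $(1-\epsilon) v(i) \le \hat{v}(i) \le (1+\epsilon) v(i)$ for every $i \in N$, so summing over any $T \subseteq N$ yields $(1-\epsilon) v(T) \le \hat{v}(T) \le (1+\epsilon) v(T)$. Combining with the optimality of $\hat{S}$ for $\hat{v}$ among size-$k$ subsets gives
\begin{equation*}
  v(\hat{S}) \;\ge\; \frac{1}{1+\epsilon}\,\hat{v}(\hat{S}) \;\ge\; \frac{1}{1+\epsilon}\,\hat{v}(S^*) \;\ge\; \frac{1-\epsilon}{1+\epsilon}\,v(S^*).
\end{equation*}

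Finally I would convert this into the stated $(1-2\epsilon)$-competitive bound via the elementary inequality $\frac{1-\epsilon}{1+\epsilon} \ge 1 - 2\epsilon$, which follows from $(1-\epsilon) - (1-2\epsilon)(1+\epsilon) = 2\epsilon^2 \ge 0$. This is valid for all $\epsilon \ge 0$ (the bound is vacuous for $\epsilon \ge 1/2$), and completes the proof. There is no real obstacle here; the only subtlety is being careful that the multiplicative bound lifts cleanly from single values to the sums $v(\hat{S}), v(S^*), \hat{v}(\hat{S}), \hat{v}(S^*)$, which it does because the inequalities hold term-by-term.
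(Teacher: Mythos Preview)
Your proof is correct and follows essentially the same approach as the paper: observe that $M=\emptyset$ forces the algorithm to output $\hat{S}$, then chain the multiplicative error bounds with the optimality of $\hat{S}$ for $\hat{v}$ to get $v(\hat{S}) \ge \frac{1-\epsilon}{1+\epsilon}\,v(S^*) \ge (1-2\epsilon)\,v(S^*)$. Your write-up is slightly more explicit about why the algorithm returns exactly $\hat{S}$ and about the final inequality $\frac{1-\epsilon}{1+\epsilon} \ge 1-2\epsilon$, but the argument is identical.
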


\begin{proof}
	From the definition of $\epsilon$, we have $(1 - \epsilon) v(i) \le \hat{v}(i) \le (1 + \epsilon) v(i)$ for all $i \in N$.

	In the case where $M = \emptyset$, the algorithm hires all candidates in $\hat{S}$.
	Since $\hat{S}$ is the set of candidates with the top $k$ predicted values, we obtain
	\begin{align}
		& \sum_{i \in \hat{S}} v(i) \ge \sum_{i \in \hat{S}} \frac{1}{1 + \epsilon} \hat{v}(i) \ge \frac{1}{1 + \epsilon} \sum_{i \in S^*} \hat{v}(i) 
		\ge \frac{1 - \epsilon}{1 + \epsilon} \sum_{i \in S^*} v(i) \ge (1 - 2\epsilon) \sum_{i \in S^*} v(i),\label{eq:multiple-empty}
	\end{align}
	where the first and third inequalities result from the definition of $\epsilon$.
	Therefore, in the case where $M = \emptyset$, the algorithm is $(1 - 2 \epsilon)$-competitive.

\end{proof}

\subsection{The case where $|M| > \sqrt{k} \ln k$}

Next, we consider the case where $|M| > \sqrt{k} \ln k$.
In this case, the algorithm switches from following the prediction to Kleinberg's algorithm when the first element of $M$ appears.
Let $t_M \in [0,1]$ be the time when such a candidate appears.
We denote the candidate that appears at $t_M$ by $i_M$.
Let $Y \subseteq N$ be the candidates that arrive in time $[0, t_M)$ and $Z = (N \setminus Y) \setminus \{i_M\}$.
Note that $M \setminus \{i_M\} \subseteq Z$ holds by definition.

The algorithm hires all of the candidates in $\hat{S}$ that appear before $t_M$.
Therefore, the set of candidates hired by the algorithm before $t_M$ is $S = \hat{S} \cap Y$.
Additionally, the algorithm hires $i_M$ at time $t_M$ except when $|S| = k$.
After $t_M$, the algorithm hires $T$ by applying Kleinberg's algorithm to the remaining candidates $Z$ and the remaining capacity $k' = k - |\hat{S} \cap Y| - 1$.

In the case where $|M| > \sqrt{k} \ln k$, \Cref{alg:multiple} switches to Kleinberg's algorithm in an early phase and the remaining candidates $Z$ contain most of an optimal solution $S^*$.
Therefore, we can show that the output $T$ of Kleinberg's algorithm is a good approximation to the optimal solution $S^*$.
Formally, we can obtain the following bound.

\begin{restatable}{lemma}{lemlarge}\label{lem:large}
	If $|M| > \sqrt{k} \ln k$, then \Cref{alg:multiple} is $(1 - 13 \sqrt{\frac{\ln k}{k}})$-competitive.
\end{restatable}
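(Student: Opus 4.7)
The plan is to exploit that $|M| > \sqrt{k}\ln k$ forces the switch time $t_M$ to be $O(1/\sqrt{k})$ with high probability, so that Kleinberg's routine receives a nearly full remaining capacity $k'$ and a residual stream $Z$ that still retains essentially all of $S^*$; Kleinberg's $(1-5/\sqrt{k'})$-guarantee applied to the rescaled continuous-time problem on $Z$, together with an estimate of $v(S^*) - OPT_{Z,k'}$, then delivers the claimed bound.

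First, I would bound $t_M$. Since $t_M = \min_{i \in M} t_i$ is the minimum of $|M|$ independent $\mathrm{Unif}[0,1]$ variables, $P(t_M > s) = (1-s)^{|M|} \le e^{-s|M|}$; taking $s = \ln k/|M| \le 1/\sqrt{k}$ yields $P(t_M > 1/\sqrt{k}) \le 1/k$. Conditional on $t_M$, each candidate of $\hat{S}\setminus M$ lies in $Y$ independently with probability $t_M$, so $|\hat{S}\cap Y|$ is stochastically dominated by $\mathrm{Binom}(|\hat{S}|, t_M)$ with mean at most $\sqrt{k}$; a Chernoff bound on the good event $\{t_M \le 1/\sqrt{k}\}$ then gives $|\hat{S}\cap Y| \le 3\sqrt{k\ln k}$ with probability $1 - O(1/k)$. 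Consequently, $k' = k - |\hat{S}\cap Y| - 1 \ge k(1 - O(\sqrt{\ln k/k}))$ and $5/\sqrt{k'} \le 5\sqrt{\ln k/k}(1 + o(1))$. The identical argument applied to $S^*\setminus M$ also yields $|S^* \cap Y| \le 3\sqrt{k\ln k}$ with high probability, whence $|S^* \cap Z| \ge k/2$ on the good event for $k$ large enough that the target bound is nonvacuous.

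Next I would decompose $v(S^*) - OPT_{Z,k'}$ into three expected contributions. (i) $E[v(S^*\cap Y)]$: since $P(i \in Y) = E[t_M] \le 1/|M|$ for each $i \in S^*\setminus M$, this is at most $v(S^*)/|M| = O(v(S^*)/(\sqrt{k}\ln k))$. (ii) $E[v(i_M)\mathbf{1}[i_M \in S^*]]$: because $i_M$ is uniformly distributed on $M$, this equals $v(S^* \cap M)/|M| \le v(S^*)/|M|$. (iii) A truncation loss that arises only when $|S^*\cap Z| > k'$, in which case the top-$k'$ of $Z$ coincides with the top-$k'$ of $S^*\cap Z$ (since every element of $Z \setminus S^*$ has value below every element of $S^* \cap Z$), so we discard at most $|S^*\cap Z| - k' \le |\hat{S}\cap Y| + 1$ of the smallest values of $S^* \cap Z$. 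Applying the elementary fact that the smallest $m$ of any $n$ nonnegative numbers sum to at most $(m/n)$ of their total, on the good event this loss is at most $\frac{|\hat{S}\cap Y| + 1}{k/2}\, v(S^*) \le 6\sqrt{\ln k/k}\cdot v(S^*) + o(\sqrt{\ln k/k})\, v(S^*)$.

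Finally, after rescaling arrival times on $(t_M,1]$ to $[0,1]$, the candidates of $Z$ arrive in uniformly random order, and Kleinberg's result gives $E[v(T)\mid Z, k'] \ge (1 - 5/\sqrt{k'})\, OPT_{Z,k'}$. Taking expectations, using $OPT_{Z,k'} \le v(S^*)$, bounding the bad event by $v(T) \ge 0$ at an additive cost of $O(v(S^*)/k) = o(\sqrt{\ln k/k}) v(S^*)$, and combining the three estimates above yields $E[v(\text{output})] \ge E[v(T)] \ge v(S^*)\bigl(1 - (6 + 5)\sqrt{\ln k/k} - o(\sqrt{\ln k/k})\bigr) \ge (1 - 13\sqrt{\ln k/k})\, v(S^*)$ for $k$ sufficiently large (for small $k$ the bound $1 - 13\sqrt{\ln k/k}$ is already vacuous). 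The main obstacle is the constant-chasing: the Chernoff tail, the truncation estimate, and the Kleinberg factor each contribute an explicit constant, and one must tune them so that the sum does not exceed $13$.
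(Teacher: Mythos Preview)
Your proposal is correct and follows essentially the same route as the paper's proof: bound $t_M\le 1/\sqrt{k}$ with probability $1-1/k$, use a concentration bound to control $|\hat S\cap Y|$ (and hence $k'$), show that $S^*\cap Z$ retains almost all of $v(S^*)$, account for the truncation from $|S^*\cap Z|$ down to $k'$ via the averaging inequality, and then invoke Kleinberg's $(1-5/\sqrt{k'})$ guarantee on the rescaled residual stream. The paper organizes the bookkeeping slightly differently---it first proves $\mathbb{E}[v(S^*\cap Z)]\ge (1-1/\sqrt{k})\,v(S^*)$ in one line (your steps (i) and (ii) combined) and then handles the truncation to $k'$ (your step (iii)) as a separate multiplicative $(1-3\sqrt{\ln k/k})$ factor---but the ideas and the constant-chasing are the same.
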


\begin{proof}
	Since the arrival time of each element in $M$ independently conforms to the uniform distribution on $[0, 1]$, we have
	\begin{align*}
		\Pr\left( t_M < \frac{1}{\sqrt{k}} \right) \ge  1 - {\left( 1 - \frac{1}{\sqrt{k}} \right)}^{\sqrt{k} \ln k} \ge 1 - \frac{1}{k}.
	\end{align*}
	By conditioning on $t_M < \frac{1}{\sqrt{k}}$, each candidate in $\hat{S} \setminus M$ independently appears between $[0, t_M)$ with probability $t_M$ and each candidate in $\hat{S} \cap M$ appears after $t_M$.
	By applying H\oe{}ffding's inequality, we obtain
	\begin{align*}
		\Pr\left(\left|\hat{S} \cap Y\right| \le 2 \sqrt{k \ln k}\right) \ge 1 - \exp(-2 \ln k) \ge 1 - \frac{1}{k}.
	\end{align*}
	That is, the number of candidates hired before $t_M$ is at most $2 \sqrt{k \ln k}$ with probability more than $1 - \frac{1}{k}$.
	In this case, $k' \ge k - 2\sqrt{k \ln k} - 1 \ge k - 3 \sqrt{k \ln k}$, where we use $1$ slot for $i_M$.

	Moreover, by conditioning on $t_M < \frac{1}{\sqrt{k}}$ and $|\hat{S} \cap Y| \le 2 \sqrt{k \ln k}$, each candidate in $S^*$ behaves as follows.
	\begin{itemize}
		\item Each candidate in $(S^* \setminus M) \setminus \hat{S}$ is independently contained in $Z$ with probability at least $1 - \frac{1}{\sqrt{k}}$.
		\item Each candidate in $(S^* \setminus M) \cap \hat{S}$ appears after $t_M$ with probability at least $1 - \frac{1}{\sqrt{k}}$.
			Since we are conditioning on $|\hat{S} \cap Y| \le 2 \sqrt{k \ln k}$, they are not independent, but they appear after $t_M$ with higher probability.
		\item Each candidate in $S^* \cap M$ appears after $t_M$ except $i_M$.
	\end{itemize}
	From these observations, each candidate in $S^* \setminus M$ appears after $t_M$ with probability at least $1 - \frac{1}{\sqrt{k}}$, and $i_M$ is selected from $M$ uniformly at random.
	Therefore, we obtain
	\begin{align}
		\bbE[ v(S^* \cap Z) ]
		\ge \left( 1 - \frac{1}{\sqrt{k}} \right) v(S^* \setminus M) + \left( 1 - \frac{1}{|M|} \right) v(S^* \cap M)
		\ge \left( 1 - \frac{1}{\sqrt{k}} \right) v(S^*). \label{eq:sstar_z}
	\end{align}
	Note that we need to consider only $k$ such that $1 - 13 \sqrt{\frac{\ln k}{k}} \ge 0$.
	Since $k' \ge k - 3 \sqrt{k \ln k}$, the competitive ratio of \citet{Kle05} can be bounded as  $1- \frac{5}{\sqrt{k'}} \ge 1 - \frac{8}{\sqrt{k}}$ for such $k$.
	By conditioning on $t_M < \frac{1}{\sqrt{k}}$ and $|\hat{S} \cap Y| \le 2 \sqrt{k \ln k}$, the output of the algorithm can be evaluated as
	\begin{align*}
		&\bbE[ v(S \cup T \cup \{i_M\}) ]
		\ge \bbE[ v(T) ]
		\ge \left( 1 - \frac{8}{\sqrt{k}} \right) \bbE \left[ \max_{T^* \subseteq S^* \cap Z \colon |T^*| \le k'} v(T^*) \right]\\
		&\ge \left( 1 - \frac{8}{\sqrt{k}} \right) \left( 1 - 3 \sqrt{\frac{\ln k}{k}} \right)\bbE \left[ v(S^* \cap Z) \right]
		\ge \left( 1 - 11 \sqrt{\frac{\ln k}{k}} \right) v(S^*). \tag{From~\eqref{eq:sstar_z}}
	\end{align*}
	By recalling that the events $t_M < \frac{1}{\sqrt{k}}$ and $|\hat{S} \cap Y| \le 2\sqrt{k \ln k}$ occur with probability at least $1 - \frac{2}{k}$, we conclude that the algorithm is $(1 - 13 \sqrt{\frac{\ln k}{k}})$-competitive in the case where $|M| \ge \sqrt{k} \ln k$.

\end{proof}

\subsection{The case where $|M| \le \sqrt{k} \ln k$}

Next, we consider the case where $|M| \le \sqrt{k} \ln k$.
Here we use the same notation as in the case where $|M| > \sqrt{k} \ln k$.

In contrast to the case where $M$ is large, we require a more detailed analysis in the case where $|M| \le \sqrt{k} \ln k$.
Since \Cref{alg:multiple} does not switch to Kleinberg's algorithm in an early phase with high probability, we cannot ignore the candidates hired before time $t_M$.
By fixing $t_M$, we compute the values obtained before $t_M$, on $t_M$, and after $t_M$ separately.
Then, we take the summation of them and take the expectation over $t_M$.

The proof is organized as follows.
First, we prove $k'$, which is the remaining slots for Kleinberg's algorithm, is at least $\sqrt{k} \ln k$ with high probability (\Cref{lem:kprime}).
Next, we show that if $k' \ge \sqrt{k} \ln k$, then the competitive ratio of Kleinberg's algorithm is at least $1-1/\sqrt{k'} = 1-O(\ln k/\sqrt{k})$ in expectation (\Cref{lem:reciprocal}).
By using these two lemmas, we prove the competitive ratio is at least $1-2\theta-O(\ln n/\sqrt{k})$ (\Cref{lem:small}).
We provide the full proof below.

First, we prove the following lemma that claims $k'$ is large enough with high probability.
\begin{lemma}\label{lem:kprime}
	Assume $|M| \le \sqrt{k} \ln k$.
	If we define $k' = k - |S| - 1$, then
	\begin{equation*}
		\Pr(k' \ge \sqrt{k} \ln k) \ge 1 - \frac{3 \ln k}{\sqrt{k}}.
	\end{equation*}
\end{lemma}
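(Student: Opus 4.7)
The plan is to prove directly that $|S| \le k - \sqrt{k}\ln k - 1$ holds with probability at least $1 - 3\ln k/\sqrt{k}$, which is equivalent to $k' \ge \sqrt{k}\ln k$. Throughout I assume $|M| \ge 1$ (since otherwise the algorithm never switches and $k'$ is never used downstream) and that $k$ is large enough that the bound $1 - 3\ln k/\sqrt{k}$ is nontrivial; for small $k$ the conclusion is vacuous.

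The first step is a structural simplification. Because the first element of $M$ arrives at exactly $t_M$ and $Y$ consists of arrivals strictly before $t_M$, we have $Y \cap M = \emptyset$, so
\[
    |S| \;=\; |\hat{S} \cap Y| \;=\; \bigl|(\hat{S} \setminus M) \cap Y\bigr|,
    \qquad |\hat{S} \setminus M| \le |\hat{S}| \le k.
\]
The second step is to decouple $t_M$ from the arrivals outside $M$. Since all arrival times are i.i.d.\ Uniform$(0,1)$, $t_M$ is the minimum of $m := |M| \ge 1$ independent Uniform$(0,1)$ random variables and is independent of the arrival times of $N \setminus M$. Hence $\Pr(t_M > t) = (1-t)^m \le 1-t$, and choosing $t = 1 - 2\ln k/\sqrt{k}$ yields
\[
    \Pr\!\left(t_M > 1 - \frac{2\ln k}{\sqrt{k}}\right) \;\le\; \frac{2\ln k}{\sqrt{k}}.
\]

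The heart of the argument is a conditional concentration bound on the event $\mathcal{E} := \{t_M \le 1 - 2\ln k/\sqrt{k}\}$. Conditional on $t_M$, the random variable $|(\hat{S}\setminus M)\cap Y|$ is a sum of at most $k$ independent Bernoulli$(t_M)$ indicators, so its conditional mean is at most $k\cdot t_M$, which on $\mathcal{E}$ is at most $k - 2\sqrt{k}\ln k$. Exceeding the threshold $k - \sqrt{k}\ln k - 1$ therefore requires a deviation of at least $\sqrt{k}\ln k - 1$ above the mean, and Hoeffding's inequality bounds this by
\[
    \exp\!\left(-\frac{2(\sqrt{k}\ln k - 1)^2}{k}\right) \;=\; k^{-\Theta(\ln k)},
\]
which is much smaller than $\ln k/\sqrt{k}$ for all sufficiently large $k$. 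A union bound over the complement of $\mathcal{E}$ and this conditional deviation event then gives
\[
    \Pr\bigl(|S| > k - \sqrt{k}\ln k - 1\bigr) \;\le\; \frac{2\ln k}{\sqrt{k}} + k^{-\Theta(\ln k)} \;\le\; \frac{3\ln k}{\sqrt{k}}.
\]

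The only real obstacle is bookkeeping: choosing the split of the $3\ln k/\sqrt{k}$ budget between the tail of $t_M$ and the Hoeffding deviation so that constants line up, and confirming that the range of $k$ for which the Hoeffding term is negligible covers the range in which the bound is nontrivial. Nothing in the argument is deep; the assumption $|M| \le \sqrt{k}\ln k$ from the surrounding case analysis is in fact not used, only $|M| \ge 1$ is.
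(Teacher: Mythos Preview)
Your proof is correct and follows essentially the same approach as the paper: split on the event $\{t_M \le 1 - 2\ln k/\sqrt{k}\}$, bound the bad branch by $2\ln k/\sqrt{k}$ via the CDF of $t_M$, and on the good branch apply Hoeffding to the binomial count of elements of $\hat{S}\setminus M$ landing before/after $t_M$, then union bound. The only cosmetic difference is that the paper counts $|(\hat{S}\setminus M)\cap Z|$ (elements after $t_M$) to control $k'$ directly while you count $|S| = |(\hat{S}\setminus M)\cap Y|$; your observation that the hypothesis $|M|\le \sqrt{k}\ln k$ is not actually used here is also correct.
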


\begin{proof}
	We consider the case where $t_M < 1 - \frac{2 \ln k}{\sqrt{k}}$ and the case where $t_M \ge 1 - \frac{2 \ln k}{\sqrt{k}}$ separately as
	\begin{align*}
		\Pr(k' < \sqrt{k} \ln k)
		\le & ~ \Pr\left(t_M < 1 - \frac{2 \ln k}{\sqrt{k}}\right) \Pr\left(k' < \sqrt{k} \ln k \,\middle|\, t_M < 1 - \frac{2 \ln k}{\sqrt{k}}\right)\\
		+ & ~ \Pr\left(t_M \ge 1 - \frac{2 \ln k}{\sqrt{k}}\right) \Pr\left(k' < \sqrt{k} \ln k \,\middle|\, t_M \ge 1 - \frac{2 \ln k}{\sqrt{k}}\right).
	\end{align*}
	Note that $|(\hat{S} \setminus M) \cap Z|$ conforms to the binomial distribution with parameters $|\hat{S} \setminus M|$ and $1 - t_M$, and $k' = |(\hat{S} \setminus M) \cap Z| + |\hat{S} \cap M| - 1$.
	Conditioning on $t_M < 1 - \frac{2\ln k}{\sqrt{k}}$, from H\oe{}ffding's inequality, we can bound the probability that $k' < \sqrt{k} \ln k$ holds as
	\begin{align*}
		\Pr\left(k' < \sqrt{k} \ln k \,\middle|\, t_M < 1 - \frac{2 \ln k}{\sqrt{k}}\right) \le \exp\left(- \frac{2{(\sqrt{k} \ln k)}^2}{k}\right) \le \exp( - 2 \ln k) \le \frac{1}{k^2}.
	\end{align*}
	Since the cumulative distribution function of $t_M$ is $\Pr(t_M \le x) = 1 - {(1 - x)}^{|M|}$, we have $\Pr(t_M \ge 1 - 2 \ln k / \sqrt{k}) \le 2 \ln k / \sqrt{k}$.
	Therefore,
	\begin{align*}
		\Pr(k' < \sqrt{k} \ln k) \le \frac{1}{k^2} + \frac{2 \ln k}{\sqrt{k}} \le \frac{3 \ln k}{\sqrt{k}},
	\end{align*}
	which leads to the statement.

\end{proof}

Note that the competitive ratio $1-\frac{5}{\sqrt{k'}}$ of Kleinberg's algorithm worsens as $k'$ becomes small.
To bound the competitive ratio of Kleinberg's algorithm, we use the following formula concerning the expected value of the reciprocal of the binomial distribution.
\begin{lemma}[{\citep{CS72}}]\label{lem:cs72}
	If $X$ is a random variable that conforms to the binomial distribution with parameters $n$ and $p$, then
	\begin{align*}
		\bbE \left[ \frac{1}{X+1} \right] = \frac{1-{(1-p)}^{n+1}}{(n+1)p}.
	\end{align*}
\end{lemma}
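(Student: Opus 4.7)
The plan is a direct computation from the definition of expectation, using a single binomial-coefficient identity. I would start by writing
\[
\bbE\left[\frac{1}{X+1}\right] = \sum_{k=0}^{n} \frac{1}{k+1} \binom{n}{k} p^k (1-p)^{n-k},
\]
and then apply the elementary identity $\frac{1}{k+1}\binom{n}{k} = \frac{1}{n+1}\binom{n+1}{k+1}$, which follows immediately from expanding both sides with factorials. The strategy is simply to recognize that absorbing the $\frac{1}{k+1}$ into the binomial coefficient converts each summand, up to a prefactor, into a term of the probability mass function of a binomial$(n+1,p)$ distribution, which must sum to $1$.

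Substituting this identity, pulling the constant $\frac{1}{(n+1)p}$ out in front (absorbing one extra factor of $p$ into each summand), and reindexing with $j = k+1$ so that $n-k = n+1-j$, the expression becomes
\[
\frac{1}{(n+1)p}\sum_{j=1}^{n+1}\binom{n+1}{j} p^{j}(1-p)^{n+1-j}.
\]
This is precisely the full binomial expansion of $(p + (1-p))^{n+1} = 1$ with the $j=0$ term $(1-p)^{n+1}$ removed, so subtracting that missing term gives the claimed value $\frac{1-(1-p)^{n+1}}{(n+1)p}$.

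There is no real obstacle; the only subtlety is the bookkeeping during the reindexing, making sure the exponent of $1-p$ correctly becomes $n+1-j$ and that the summation starts at $j=1$ rather than $j=0$. A shorter alternative would be to use the integral representation $\frac{1}{X+1} = \int_0^1 y^X \, dy$, apply Fubini to move the expectation inside, compute $\bbE[y^X] = (1-p+py)^n$ from the binomial moment-generating function, and then integrate in $y$ via the substitution $u = 1-p+py$; this collapses to the same closed form in one line. Either route is elementary, and since the statement is cited from \citep{CS72} the whole derivation can legitimately be compressed to a short remark.
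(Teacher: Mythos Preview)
Your derivation is correct and is the standard textbook computation; the paper itself does not supply a proof at all but simply cites the identity from \citep{CS72}, so there is nothing to compare against. Both routes you sketch (the binomial-coefficient absorption and the integral representation $\frac{1}{X+1}=\int_0^1 y^X\,dy$) are valid and well known.
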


\begin{lemma}\label{lem:reciprocal}
	Conditioning on $k' \ge \sqrt{k} \ln k$, the expected value of $\frac{1}{\sqrt{k'}}$ is bounded as
	\begin{equation*}
		\bbE \left[ \frac{1}{\sqrt{k'}} \,\middle|\, k' \ge \sqrt{k} \ln k \right]
		\le \sqrt{ \left( 1 + \frac{3\ln k}{\sqrt{k}} \right) \frac{\ln (k+2)}{k}}.
	\end{equation*}
\end{lemma}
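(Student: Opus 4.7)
The plan is to combine Jensen's inequality with \Cref{lem:kprime} and \Cref{lem:cs72}. Write $A = \{k' \ge \sqrt{k}\ln k\}$. Since $\sqrt{\cdot}$ is concave, the conditional form of Jensen's inequality gives $\bbE[1/\sqrt{k'} \mid A] \le \sqrt{\bbE[1/k' \mid A]}$, so it suffices to show $\bbE[1/k' \mid A] \le (1 + 3\ln k/\sqrt{k})\ln(k+2)/k$. Writing $\bbE[1/k' \mid A] = \bbE[(1/k')\mathbf{1}_A]/\Pr(A)$ and applying \Cref{lem:kprime} to lower-bound $\Pr(A) \ge 1 - 3\ln k/\sqrt{k}$ accounts for (essentially) the $(1+3\ln k/\sqrt{k})$ factor via $1/(1-x) \le 1 + O(x)$. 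Since $k' \ge \sqrt{k}\ln k$ on $A$, we also have $1/k' \le (1 + o(1))/(k'+1)$, reducing the task to bounding $\bbE[1/(k'+1)]$.

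For the latter, I would couple $k'+1$ with a binomial random variable through $t_M$. Conditional on $t_M = t$, the arrival times of the elements of $\hat{S} \setminus M$ are iid uniform on $[0,1]$ and independent of $t_M$, so $X := |(\hat{S} \setminus M) \cap Z| \sim \mathrm{Bin}(|\hat{S} \setminus M|, 1-t)$; the identity $|\hat{S}| = k$ gives $k' + 1 = X + |\hat{S} \cap M| \ge X$. \Cref{lem:cs72} then yields $\bbE[1/(X+1) \mid t_M = t] = (1 - t^{a+1})/((a+1)(1-t))$ with $a = |\hat{S} \setminus M|$. Integrating this against the density $|M|(1-t)^{|M|-1}$ of $t_M$ and using $(1 - t^{a+1})/(1-t) = \sum_{j=0}^{a} t^j$ cancels the $(1-t)$ singularity and reduces the integral to a sum of Beta integrals, whose worst case over $|M| \ge 1$ is $|M| = 1$. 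That case evaluates to $H_{k+1}/(k+1)$, which the standard bound $H_{k+1} \le \ln(k+2) + 1$ controls by essentially $\ln(k+2)/k$.

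The main obstacle will be fitting all of the accumulated slack factors into the single multiplicative $1 + 3\ln k/\sqrt{k}$ in the statement: the $1/\Pr(A)$ contribution, the $1/k' \to 1/(k'+1)$ approximation, and the harmonic-to-logarithm gap are each $1 + o(1)$ but of different orders, and in particular the gap $H_{k+1}/(k+1) - \ln(k+2)/k$ is of order $1/\ln k$ rather than $\ln k/\sqrt{k}$. A likely fix is to perform the binomial analysis after conditioning on $A$, since restricting to $k' \ge \sqrt{k}\ln k$ truncates the bottom tail of the harmonic sum and yields a materially smaller conditional expectation (of order $\ln k/(2k)$ in the worst case), comfortably inside the claimed bound. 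A secondary subtlety is that the worst case $|M| = 1$ has uniform $t_M$-density, so the singular $1/(1-t)$ integrand must be handled by the exact closed form from \Cref{lem:cs72} rather than the cruder bound $\bbE[1/(X+1)] \le 1/((a+1)p)$.
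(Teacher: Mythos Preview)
Your plan is essentially the paper's proof: Jensen to pass from $1/\sqrt{k'}$ to $1/k'$, \Cref{lem:cs72} conditional on $t_M$ to get $(1-t^{a+1})/((a+1)(1-t)) = \frac{1}{a+1}\sum_{j\le a} t^j$, integration against the density of $t_M$ with worst case $m=1$ giving a harmonic sum $\approx \ln(k+2)/k$, and \Cref{lem:kprime} to convert the unconditional bound into a conditional one at the cost of the factor $1+3\ln k/\sqrt{k}$. The one simplification the paper makes that dissolves your accumulated-slack worry is to work with the \emph{unconditional} quantity $\bbE[1/\max\{k',1\}] \le 2\,\bbE[1/(k'+1)]$ throughout and only at the very end use $\bbE[1/\max\{k',1\}] \ge \Pr(A)\,\bbE[1/k'\mid A]$, so the $1/k'\to 1/(k'+1)$ conversion and the $\Pr(A)$ division are not stacked multiplicatively.
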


\begin{proof}
	From Jensen's inequality, we have
	\begin{align*}
		\bbE \left[ \frac{1}{\sqrt{k'}} \,\middle|\, k' \ge \sqrt{k}\ln k \right] \ge \sqrt{\bbE\left[\frac{1}{k'} \,\middle|\, k' \ge \sqrt{k}\ln k \right]}.
	\end{align*}
	From \Cref{lem:cs72}, by fixing $t_M$ and taking the expectation over $k'$, we obtain
	\begin{align*}
		\bbE\left[ \frac{1}{\max\{k', 1\}} \right] \le \bbE\left[ \frac{2}{k'+1} \right] \le \frac{1 - t_M^{k+1}}{(1-t_M)(k+1)} \le \frac{1}{k} \sum_{i=0}^{k+1} t_M^i
	\end{align*}
	Recall that the probability density function of $t_M$ is $m (1-t_M)^{m-1}$.
	Since $\frac{1}{k} \sum_{i=0}^{k+1} t_M^i$ is monotone non-decreasing in $t_M \in [0, 1]$, the expected value is maximized when $m = 1$.
	By taking the expectation over $t_M$, we have
	\begin{align*}
		&\bbE\left[ \frac{1}{\max\{k', 1\}} \right]
		\le \int_0^1 \frac{1}{k} \sum_{i=0}^{k+1} t_M^i m {(1 - t_M)}^{m-1} \mathrm{d}t_M
		\le \int_0^1 \frac{1}{k} \sum_{i=0}^{k+1} t_M^i \mathrm{d}t_M\\
		&= \frac{1}{k} \sum_{i=0}^{k+1} \int_0^1 t_M^i \mathrm{d}t_M
		= \frac{1}{k} \sum_{i=0}^{k+1} \frac{1}{i+1}
		\le \frac{\ln (k+2)}{k}.
	\end{align*}
	By considering
	\begin{align*}
		\bbE\left[\frac{1}{\max\{k', 1\}} \right]
		\ge \Pr(k' \ge \sqrt{k} \ln k) \cdot \bbE\left[\frac{1}{k'} \,\middle|\, k' \ge \sqrt{k} \ln k \right]
		\ge \left( 1 - \frac{3\ln k}{\sqrt{k}} \right) \bbE\left[\frac{1}{k'} \,\middle|\, k' \ge \sqrt{k} \ln k \right], \tag{from \Cref{lem:kprime}}
	\end{align*}
	we obtain
	\begin{align*}
		\bbE \left[ \frac{1}{\sqrt{k'}} \,\middle|\, k' \ge \sqrt{k} \ln k \right]
		\le \sqrt{ \bbE\left[\frac{1}{k'} \,\middle|\, k' \ge \sqrt{k} \ln k \right] }
		\le \sqrt{ \left( 1 + \frac{3\ln k}{\sqrt{k}} \right) \frac{\ln (k+2)}{k}}.
	\end{align*}
\end{proof}

By utilizing these lemmas, we prove the competitive ratio of \Cref{alg:multiple} for the case where $|M| \le \sqrt{k} \ln k$.

\begin{restatable}{lemma}{lemsmall}\label{lem:small}
	If $|M| \le \sqrt{k} \ln k$, then \Cref{alg:multiple} is $(1 - 2 \theta - 13 \ln k / \sqrt{k})$-competitive.
\end{restatable}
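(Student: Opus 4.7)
The plan is to condition on $t_M$, the arrival time of the first $M$-candidate, and on the induced partition $N = Y \cup \{i_M\} \cup Z$, and then bound the algorithm's output $(\hat S \cap Y) \cup \{i_M\} \cup T$ against the decomposition
\begin{equation*}
v(S^*) = v(S^* \cap Y) + v(S^* \cap \{i_M\}) + v(S^* \cap Z).
\end{equation*}
Here $T$ is Kleinberg's output on $Z$ with remaining capacity $k' = k - |\hat S \cap Y| - 1$. I will bound $v(\hat S \cap Y)$, $v(i_M)$, and $v(T)$ against the three pieces of $v(S^*)$ separately, then take expectation over $t_M$ and the $Y/Z$ split.

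For the $Y$-portion, the key observation is that $M \cap Y = \emptyset$ by definition of $t_M$, so every $i \in Y$ satisfies $(1-\theta)v(i) \le \hat v(i) \le (1+\theta)v(i)$. A short exchange argument shows that $\hat S \cap Y$ equals the top-$|\hat S \cap Y|$ elements of $Y$ by predicted value: if some $j \in Y \setminus \hat S$ had $\hat v(j) > \hat v(i)$ for some $i \in \hat S \cap Y$, swapping would strictly increase $\hat v(\hat S)$, contradicting the top-$k$ property. Assuming $|\hat S \cap Y| \ge |S^* \cap Y|$ this yields $\hat v(\hat S \cap Y) \ge \hat v(S^* \cap Y)$, and combining with the two-sided accuracy bound on $Y$ gives
\begin{equation*}
v(\hat S \cap Y) \ge \frac{1}{1+\theta}\hat v(\hat S \cap Y) \ge \frac{1-\theta}{1+\theta} v(S^* \cap Y) \ge (1-2\theta)\,v(S^* \cap Y).
\end{equation*}
For the $Z$-portion, Kleinberg's guarantee gives $v(T) \ge (1 - 5/\sqrt{k'})\,v(S^* \cap Z)$ whenever $|S^* \cap Z| \le k'$. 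By \Cref{lem:kprime} the event $k' \ge \sqrt{k}\ln k$ holds with probability at least $1 - 3\ln k/\sqrt{k}$, and conditional on this event \Cref{lem:reciprocal} bounds $\bbE[1/\sqrt{k'}]$ by $\mathrm{O}(\sqrt{\ln k/k})$. Finally $v(i_M)$ covers $v(S^* \cap \{i_M\})$ whenever $i_M$ is hired (which, on the event $|\hat S \cap Y| < k$, it is). Summing the three bounds and taking expectation produces a $(1 - 2\theta - \mathrm{O}(\ln k/\sqrt{k}))$ factor, with constants chosen to land on the stated $13\ln k/\sqrt{k}$.

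The hard part is discharging the assumption $|\hat S \cap Y| \ge |S^* \cap Y|$ used in the $Y$-step, together with the related edge case in which the algorithm fills to capacity before $t_M$ (so no Kleinberg phase runs). The compensating observation is that whenever $|\hat S \cap Y| < |S^* \cap Y|$, by size-counting $\hat S$ must have strictly \emph{more} elements in $Z$ than $S^*$ does, so the shortfall in $Y$ is picked up by the top-$k'$ subset of $Z$ that Kleinberg's output approximates. Making this bookkeeping rigorous requires concentration of $|\hat S \cap Y|$ around its conditional mean $|\hat S \setminus M|\,t_M$ (Hoeffding gives deviation $\mathrm{O}(\sqrt{k \ln k})$, contributing only an $\mathrm{O}(\sqrt{\ln k /k})$ multiplicative loss) and a careful accounting of the early-termination case when $|\hat S \cap Y| = k$ before $t_M$, which behaves like the $M = \emptyset$ analysis of \Cref{lem:empty}. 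Once this accounting is in place, the rest of the argument is the routine application of \Cref{lem:kprime} and \Cref{lem:reciprocal}.
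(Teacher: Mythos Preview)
Your high-level structure is sound, but the decomposition $v(S^*) = v(S^*\cap Y) + v(S^*\cap\{i_M\}) + v(S^*\cap Z)$ together with the pointwise comparisons you propose runs into a real obstacle that your ``compensating observation'' does not repair. Your $Y$-step needs $|\hat S\cap Y|\ge |S^*\cap Y|$, while your $Z$-step needs $|S^*\cap Z|\le k'$; since $k'=k-|\hat S\cap Y|-1$ and $|S^*\cap Z|=k-|S^*\cap Y|-[i_M\in S^*]$, the latter is essentially $|\hat S\cap Y|\le |S^*\cap Y|$. You only discuss the case $|\hat S\cap Y|<|S^*\cap Y|$, and there your fix is incorrect: the extra $k'-|S^*\cap Z|$ slots available to Kleinberg are filled by the best elements of $Z\setminus S^*$, each of which has actual value at most $\min_{j\in S^*} v(j)$ (because $S^*$ is the true top-$k$). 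The shortfall in $Y$, however, is value coming from elements of $S^*\cap Y$, whose average can be arbitrarily larger than that minimum---e.g., when the single heaviest optimal element lands in $Y$. So the surplus capacity in $Z$ does not in general compensate. Concentration on the \emph{cardinalities} $|\hat S\cap Y|$ and $|S^*\cap Y|$ cannot rescue this, because the mismatch is in \emph{values}; a deviation of $O(\sqrt{k\ln k})$ elements can carry an $\Omega(1)$ fraction of $v(S^*)$.

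The paper sidesteps this by decomposing along $M$ rather than along $Y/Z$, writing $v(S^*)=v(S^*\setminus M)+v(S^*\cap M)$, and by taking expectation over the $Y/Z$ split immediately instead of arguing per realization. Conditioning on $t_M$ gives $\bbE[v(\hat S\cap Y)]=t_M\,v(\hat S\setminus M)$, and a single deterministic inequality $v(\hat S\setminus M)\ge(1-2\theta-\ln k/\sqrt{k})\,v(S^*\setminus M)$ (from $\hat S$ being top-$k$ by prediction and $|M|\le\sqrt{k}\ln k$) produces the before-$t_M$ contribution $t_M(1-2\theta-\ln k/\sqrt{k})\,v(S^*\setminus M)$. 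For the after-$t_M$ part, the paper lower-bounds the best $k'$-subset of $Z$ by $v((S^*\cap M)\setminus\{i_M\})+\bigl(\min\{k',|S^*\cap Z|\}-\sqrt{k}\ln k\bigr)\,v(S^*\setminus M)/k$ and then shows $\bbE[\min\{k',|S^*\cap Z|\}\mid A]\ge(1-t_M)k-O(\sqrt{k})$ via a short variance calculation; this simultaneously handles ``too little capacity'' and ``too few optimal elements in $Z$''. Adding the $t_M$ and $(1-t_M)$ shares of $v(S^*\setminus M)$, together with $v(S^*\cap M)$ covered by $i_M$ and Kleinberg, yields the stated ratio. Your proposal can be repaired along exactly these lines, but the per-realization bookkeeping as written does not close.
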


\begin{proof}
	We fix the time $t_M$ when the first element of $M$ appears and this element $i_M \in M$.
	The cumulative distribution function of $t_M$ is $F(t_M) = 1 - {(1 - t_M)}^{|M|}$, and $i_M$ is selected from $M$ uniformly at random.
	We bound the values obtained before $t_M$, at $t_M$, and after $t_M$ separately.

	First, we bound the value obtained before $t_M$.
	Since each candidate in $\hat{S} \setminus M$ independently appears before $t_M$ with probability $t_M$ and the algorithm hires all candidates in $\hat{S}$ before $t_M$, the value obtained before $t_M$ can be bounded as
		$\bbE[ v(S) ] \ge t_M v(\hat{S} \setminus M)$.
	Since $\hat{S}$ is the set with the top $k$ predicted values among $N$, the set $\hat{S} \setminus M$ is the set of the top $|\hat{S} \setminus M|$ predicted values among $N \setminus M$.
	If $|\hat{S} \setminus M| \ge |S^* \setminus M|$, then $\hat{v}(\hat{S} \setminus M) \ge \hat{v}(S^* \setminus M)$.
	Otherwise, the average of the predicted values of $\hat{S} \setminus M$ is no less than the average of the predicted values of $S^* \setminus M$.
	Therefore, we obtain
	\begin{align*}
		\hat{v}(\hat{S} \setminus M) \ge \frac{|\hat{S} \setminus M|}{|S^* \setminus M|} \hat{v}(S^* \setminus M) \ge \frac{k - \sqrt{k} \ln k}{k}\hat{v}(S^* \setminus M).
	\end{align*}
	In both cases, we have
		$\hat{v}(\hat{S} \setminus M) \ge \left(1 - \frac{\ln k}{\sqrt{k}} \right) \hat{v}(S^* \setminus M)$.
	From the definition of $M$, it holds that $\left(1 - \theta \right) v(i) \le \hat{v}(i) \le \left(1 + \theta \right) v(i)$ for all $i \in N \setminus M$, which yields
	\begin{align}\label{eq:multiple-small}
		v(\hat{S} \setminus M) \ge \left( 1 - \theta \right) \hat{v}(\hat{S} \setminus M) \ge \left(1 - \theta - \frac{\ln k}{\sqrt{k}} \right) \hat{v}(S^* \setminus M) \ge \left(1 - 2 \theta - \frac{\ln k}{\sqrt{k}} \right) v(S^* \setminus M).
	\end{align}
	Therefore, we bound the value obtained before $t_M$ as
	\begin{align}\label{eq:bound_before_tm}
		\bbE[ v(S) ] \ge t_M \left(1 - 2 \theta - \frac{\ln k}{\sqrt{k}} \right) v(S^* \setminus M).
	\end{align}

	At time $t_M$, the algorithm hires $i_M$ except when $|S| = k$, whose value is $v(i_M)$.

	Next, we bound the value obtained after $t_M$.
	Let $T^* \subseteq Z$ be the optimal solution for the remaining candidates, i.e., $T^* \in \argmax_{T^* \subseteq Z \colon |T^*| \le k'} v(T^*)$.
	From the competitive analysis by \citet{Kle05}, we have
		$v(T) \ge \max \left\{ 1 - \frac{5}{\sqrt{k'}}, 0 \right\} v(T^*)$,
	where we use the fact that the value is always non-negative.

	From now on, we consider the value $\bbE[v(T)]$ under the condition $k' \ge \sqrt{k} \ln k$.
	We denote the event $k' \ge \sqrt{k} \ln k$ by $A$ for short.
	If $k' \ge |S^* \cap Z|$, then $v(T^*) \ge v(S^* \cap Z)$.
	If $k' < |S^* \cap Z|$, we consider adding $(S^* \cap M) \setminus \{i_M\}$ first and then adding candidates in $S^* \setminus M$.
	Since $|S^* \cap M \setminus \{ i_M \}| \le |M| \le \sqrt{k} \ln k \le k'$, we can bound $v(T^*)$ as
	\begin{align*}
		v(T^*) \ge v((S^* \cap M) \setminus \{i_M\}) + \frac{k' - |(S^* \cap M) \setminus \{i_M\}|}{|(S^* \setminus M) \cap Z|} v((S^* \setminus M) \cap Z).
	\end{align*}
	In both cases, it holds that
	\begin{align*}
		v(T^*)
		\ge v((S^* \cap M) \setminus \{i_M\}) + \min \left\{ 1, \frac{k' - |(S^* \cap M) \setminus \{i_M\}|}{|(S^* \setminus M) \cap Z|} \right\}v((S^* \setminus M) \cap Z).
	\end{align*}
	By conditioning on $|S^* \cap Z| = k''$ for any $k''$ such that $|S^* \cap M| \le k'' \le k$ and taking the expectation over $S^* \cap Z$, we obtain
	\begin{align*}
		\bbE \left[ v(T^*) \right]
		&\ge v((S^* \cap M) \setminus \{i_M\}) + \min \left\{ 1, \frac{k' - |(S^* \cap M) \setminus \{i_M\}|}{|(S^* \setminus M) \cap Z|} \right\} \frac{|(S^* \setminus M) \cap Z|}{|S^* \setminus M|} v(S^* \setminus M)\\
		&\ge v((S^* \cap M) \setminus \{i_M\}) + \min \left\{ k'' - |S^* \cap M|, k' - |(S^* \cap M) \setminus \{i_M\}| \right\} \frac{v(S^* \setminus M)}{|S^* \setminus M|}\\
		&\ge v((S^* \cap M) \setminus \{i_M\}) + \left( \min \left\{ k'', k' \right\} - \sqrt{k} \ln k \right) \frac{v(S^* \setminus M)}{k},
	\end{align*}
	where the final inequality is due to $|S^* \cap M| \le |M| \le \sqrt{k} \ln k$ and $|S^* \setminus M| \le |S^*| = k$.
	By combining it with Kleinberg's bound and taking the expectation over $k'$ and $k''$, we obtain
	\begin{align*}
		&\bbE \left[ v(T) \,\middle|\, A \right]\\
		&\ge \bbE \left[ \left(1 - \frac{5}{\sqrt{k'}}\right) \left\{ v((S^* \cap M) \setminus \{i_M\}) + \left( \min \left\{ k'', k'\right\} - \sqrt{k} \ln k \right) \frac{v(S^* \setminus M)}{k} \right\}  \,\middle|\, A \right]\\
		&\ge \left( 1 - \bbE \left[ \frac{5}{\sqrt{k'}} \,\middle|\, A  \right] \right) v((S^* \cap M) \setminus \{i_M\})
		+ \bbE \left[ \min \left\{ k'', k'\right\} - 5 \sqrt{k'} - \sqrt{k} \ln k  \,\middle|\, A \right] \frac{v(S^* \setminus M)}{k}\\
		&\ge \left( 1 - \bbE \left[ \frac{5}{\sqrt{k'}}  \,\middle|\, A \right] \right) v((S^* \cap M) \setminus \{i_M\})
		+ \left( \frac{\bbE\left[\min \left\{ k'', k'\right\}  \,\middle|\, A \right]}{k} - \frac{5}{\sqrt{k}} - \frac{\ln k}{\sqrt{k}} \right) v(S^* \setminus M).
	\end{align*}

	Here we evaluate the expected value of $\min \{k', k'' \}$ under the condition $k' \ge \sqrt{k} \ln k$.
	We can make a coupling between $k'$ and $k'|A$ such that the value of $k'$ is always larger than $k'|A$.
	Similarly, we can make a coupling between $k''$ and $k''|A$ such that the value of $k''$ is always larger than $k''|A$.
	Define $x_+ = \max\{0,x\}$, that is, $x_+ = x$ if $x \ge 0$ and $x_+ = 0$ if $x < 0$ for any $x \in \bbR$.
	Since $\min\{x,y\} \ge z - {(z-x)}_+ - {(z-y)}_+$ for any $x,y,z \in \bbR$, we have
	\begin{align*}
		\bbE\left[ \min \{k', k'' \} \,\middle|\, A \right]
		&\ge (1-t_M)k-1 - \bbE\left[{((1-t_M)k-1 - k')}_+ \,\middle|\, A \right] - \bbE\left[{((1-t_M)k - k'')}_+ \,\middle|\, A \right]\\
		&\ge (1-t_M)k-1 - \bbE\left[{((1-t_M)k-1 - k')}_+ \right] - \bbE\left[{((1-t_M)k - k'')}_+\right]\\
		&\ge (1-t_M)k-1 - \bbE\left[{(\bbE[k']- k')}_+ \right] - \bbE\left[{(\bbE[k''] - k'')}_+\right]\\
		&\ge (1-t_M)k-1 - \bbE\left[\right|\bbE[k']- k'\left|\right] - \bbE\left[\left|\bbE[k''] - k''\right|\right]\\
		&\ge (1-t_M)k-1 - \sqrt{\bbE\left[{(\bbE[k'] - k')}^2\right]} - \sqrt{\bbE\left[{(\bbE[k''] - k'')}^2\right]} \tag{Jensen's inequality}\\
		&\ge (1-t_M)k-1 - \sqrt{k t_M (1-t_M)} - \sqrt{k t_M (1-t_M)}\\
		&\ge (1-t_M)k-1 - 2 \sqrt{k}.
	\end{align*}
	Therefore, the lower bound on $\bbE[v(T)]$ can be expressed as
	\begin{align}
		&\bbE \left[ v(T)  \,\middle|\, k' \ge \sqrt{k} \ln k \right] \nonumber \\
		&\ge \left( 1 - \bbE \left[ \frac{5}{\sqrt{k'}} \,\middle|\, k' \ge \sqrt{k} \ln k \right] \right) v((S^* \cap M) \setminus \{i_M\}) + \left( 1 - t_M - \frac{1}{k} - \frac{7}{\sqrt{k}} - \frac{\ln k}{\sqrt{k}} \right) v(S^* \setminus M) \nonumber \\
		&\ge \left( 1 - \bbE \left[ \frac{5}{\sqrt{k'}} \,\middle|\, k' \ge \sqrt{k} \ln k \right] \right) v((S^* \cap M) \setminus \{i_M\}) + \left( 1 - t_M - \frac{9 \ln k}{\sqrt{k}} \right) v(S^* \setminus M). \label{eq:bound_after_tm}
	\end{align}

	Now we evaluate the output of the algorithm, which is $v(S \cup \{i_M\} \cup T)$.
	Taking the expectation over all random variables including $k'$ and $t_M$, we obtain
	\begin{align*}
		&\bbE\left[ v(S \cup \{i_M\} \cup T) \right]\\
		&\ge \bbE[v(S) ] + \Pr\left( A \right) \bbE\left[ v(T \cup \{i_M\}) \,\middle|\, A \right] \\
		&\ge \bbE[t_M ] \left( 1 - 2 \theta - \frac{\ln k}{\sqrt{k}} \right) v(S^* \setminus M) + \Pr\left( A \right) \bbE\left[ v(T \cup \{i_M\}) \,\middle|\, A \right] \tag{From~\eqref{eq:bound_before_tm}} \\
		&\ge \bbE[t_M ] \left( 1 - 2 \theta - \frac{\ln k}{\sqrt{k}} \right) v(S^* \setminus M)\\
		&\qquad + \Pr\left( A \right) \left( 1 - \bbE \left[ \frac{5}{\sqrt{k'}} \,\middle|\, A \right] \right) v(S^* \cap M) + \Pr\left( A \right) \left( 1 - \bbE[ t_M | A ] - \frac{9 \ln k}{\sqrt{k}} \right) v(S^* \setminus M) \tag{From~\eqref{eq:bound_after_tm}} \\
		&\ge \bbE[t_M] \left( 1 - 2 \theta - \frac{\ln k}{\sqrt{k}} \right) v(S^* \setminus M)\\
		&\qquad + \left( \Pr\left( A \right) - \bbE \left[ \frac{5}{\sqrt{k'}} \,\middle|\, A \right] \right) v(S^* \cap M) + \left( \Pr\left( A \right) - \bbE[ t_M ] - \frac{9 \ln k}{\sqrt{k}} \right) v(S^* \setminus M) \\
		&\ge \bbE[t_M] \left( 1 - 2 \theta - \frac{\ln k}{\sqrt{k}} \right) v(S^* \setminus M)\\
		&\qquad + \left( 1 - \frac{3 \ln k}{\sqrt{k}} - \bbE \left[ \frac{5}{\sqrt{k'}} \,\middle|\, A \right] \right) v(S^* \cap M) + \left( 1 - \frac{3 \ln k}{\sqrt{k}} - \bbE[ t_M ] - \frac{9 \ln k}{\sqrt{k}} \right) v(S^* \setminus M) \tag{From \Cref{lem:kprime}} \\
		&\ge \left( 1 - 2\theta - \frac{13 \ln k}{\sqrt{k}} \right) v(S^* \setminus M) + \left( 1 - \frac{3 \ln k}{\sqrt{k}} - \bbE \left[ \frac{5}{\sqrt{k'}} \,\middle|\, A \right] \right) v(S^* \cap M).
	\end{align*}

	Since the statement holds for any $k$ such that $1 - 2\theta - 13 \ln k / \sqrt{k} < 0$, we need to consider only large $k$ such that $1 - 2\theta - 13 \ln k / \sqrt{k} \ge 0$.
	From \Cref{lem:reciprocal}, we can obtain an upper bound of $\bbE[ \frac{5}{\sqrt{k'}} | A]$ for large $k$ values as
	\begin{equation*}
		\bbE \left[ \frac{5}{\sqrt{k'}} \,\middle|\, k' \ge \sqrt{k} \ln k \right]
		\le 5 \sqrt{ \left( 1 + \frac{3\ln k}{\sqrt{k}} \right) \frac{\ln (k+2)}{k}}
		\le 7 \sqrt{ \frac{\ln k}{k}}.
	\end{equation*}

	Finally, we obtain
	\begin{align*}
		\bbE\left[ v(S \cup \{i_M\} \cup T) \right]
		&\ge \left( 1 - 2 \theta - \frac{13 \ln k}{\sqrt{k}} \right) v(S^* \setminus M) + \left( 1 - \frac{3 \ln k}{\sqrt{k}} - \frac{7 \sqrt{\ln k}}{\sqrt{k}} \right) v(S^* \cap M)\\
		&\ge \left( 1 - 2 \theta - \frac{13 \ln k}{\sqrt{k}} \right) \left( v(S^* \cap M) + v(S^* \setminus M) \right)\\
		&= \left( 1 - 2 \theta - \frac{13 \ln k}{\sqrt{k}} \right) v(S^*).
	\end{align*}
\end{proof}

\subsection{The competitive ratio bound for \Cref{alg:multiple}}

Finally, combining the lemmas for the cases where $M = \emptyset$, $M > \sqrt{k} \ln k$, and $M \le \sqrt{k} \ln k$, we obtain the following result for \Cref{alg:multiple}.

\begin{restatable}{theorem}{thmmultiple}\label{thm:multiple}
	If we set $\theta = \frac{5 \ln k}{\sqrt{k}}$, then \Cref{alg:multiple} is $\left(1 - \min \left\{\frac{21\ln k}{\sqrt{k}}, 5 \epsilon \right\}\right)$-competitive for the multiple-choice secretary problem with predictions.
\end{restatable}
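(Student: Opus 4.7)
The plan is to perform a case analysis on $|M|$ with the parameter choice $\theta = 5\ln k/\sqrt{k}$, invoking the three lemmas already established. Using the identity $1 - \min\{a,b\} = \max\{1-a,1-b\}$, the theorem reduces to proving two separate lower bounds on the competitive ratio: first, at least $1 - 5\epsilon$, and second, at least $1 - 21\ln k/\sqrt{k}$.

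I would handle the $1 - 5\epsilon$ bound first. When $M = \emptyset$, \Cref{lem:empty} immediately yields $1 - 2\epsilon \ge 1 - 5\epsilon$. When $M \neq \emptyset$, the defining inequality of $M$ forces $\epsilon > \theta = 5\ln k/\sqrt{k}$, so $5\epsilon > 25\ln k/\sqrt{k}$; this dominates the $O(\ln k/\sqrt{k})$ loss terms coming out of \Cref{lem:large} and \Cref{lem:small}, so those two lemmas together cover this branch.

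For the $1 - 21\ln k/\sqrt{k}$ bound I again split by $|M|$. The $M = \emptyset$ subcase gives $1 - 2\epsilon \ge 1 - 2\theta = 1 - 10\ln k/\sqrt{k}$, comfortably within budget. For $|M| > \sqrt{k}\ln k$, \Cref{lem:large} supplies $1 - 13\sqrt{\ln k/k}$, which is absorbed by $21\ln k/\sqrt{k}$ using $\sqrt{\ln k} \le \ln k$ valid for $k \ge 2$. For $1 \le |M| \le \sqrt{k}\ln k$, \Cref{lem:small} with $\theta = 5\ln k/\sqrt{k}$ substituted in collapses to a bound of the form $1 - \Theta(\ln k/\sqrt{k})$ that fits the target up to the constant.

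Since the three lemmas do all the technical work, what remains is essentially arithmetic verification of the case combination. The step I would scrutinize most carefully is the middle case $1 \le |M| \le \sqrt{k}\ln k$, where both the prediction-error budget $2\theta$ and the Kleinberg-style loss $13\ln k/\sqrt{k}$ appear simultaneously; the choice $\theta = 5\ln k/\sqrt{k}$ is tuned precisely so that this sum stays on the order of the stated constant while simultaneously keeping $5\theta = 25\ln k/\sqrt{k}$ large enough that $1 - 5\epsilon$ dominates every $O(\ln k/\sqrt{k})$ loss whenever $\epsilon > \theta$.
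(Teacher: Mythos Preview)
Your approach is essentially the paper's: a case split on $|M|$ invoking \Cref{lem:empty}, \Cref{lem:large}, and \Cref{lem:small}. The paper organizes the split by first comparing $5\epsilon$ with $21\ln k/\sqrt{k}$ to decide which branch of the $\min$ is active, whereas you prove both lower bounds separately in every regime; these are logically equivalent and rest on the same three lemmas.

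One point you flag as needing scrutiny but do not actually resolve: substituting $\theta = 5\ln k/\sqrt{k}$ into \Cref{lem:small} gives
\[
1 - 2\theta - \frac{13\ln k}{\sqrt{k}} \;=\; 1 - \frac{23\ln k}{\sqrt{k}},
\]
which is $2\ln k/\sqrt{k}$ short of the claimed $1 - 21\ln k/\sqrt{k}$. Your phrase ``fits the target up to the constant'' glosses over this, and in fact the constant does \emph{not} fit as stated. The paper's own proof exhibits the same arithmetic gap, so this is a minor inconsistency between the theorem's constants and those of \Cref{lem:small} rather than a defect in your strategy; but you should not claim the substitution ``collapses'' to the stated bound without either tightening a constant somewhere or acknowledging the discrepancy.
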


\begin{proof}
	If $5 \epsilon \le \frac{21 \ln k}{\sqrt{k}}$, then $M = \left\{ i \in N \,\middle|\, \left| 1 - \frac{\hat{v}(i)}{v(i)} \right| > \theta \right\} = \emptyset$.
	Hence, in this case, from \Cref{lem:empty}, the competitive ratio of \Cref{alg:multiple} is $1 - 2 \epsilon \ge 1 - 5 \epsilon$.
	If $5 \epsilon > \frac{21 \ln k}{\sqrt{k}}$, then $M \neq \emptyset$.
	In this case, from \Cref{lem:large} and \Cref{lem:small}, \Cref{alg:multiple} is $\left( 1 - \frac{21 \ln k}{\sqrt{k}} \right)$-competitive.

\end{proof}

\section{Different definition of the prediction error}\label{sec:epsilon}
Here, we show that we can provide the same bounds as \Cref{thm:classical} and \Cref{thm:multiple} for different definitions of the prediction error $\epsilon$ in a way similar to the original proofs.

\subsection{Classical secretary problem}
\Cref{thm:classical} still holds even if we replace the definition of $\epsilon$ with
\begin{equation*}
	\epsilon = \max \left\{ 1 - \frac{\hat{v}(\hat{\imath})}{v(i^*)}, \frac{\hat{v}(\hat{\imath})}{v(\hat{\imath})} - 1 \right\},
\end{equation*}
where $i^* \in \argmax_{i \in N} v(i)$ in an optimal candidate and $\hat{\imath} \in \argmax_{i \in N} \hat{v}(i)$ is a candidate with the largest predicted value.
The first term of the maximum represents how large the optimal actual value is compared to the largest predicted value.
The second term of the maximum represents how small the actual value of $\hat{\imath}$ is compared to its predicted value.
The value of this $\epsilon$ is always no larger than that of the original definition of $\epsilon$.

Along with this, we must change the condition under which the mode switches to \textsc{Secretary} (line~\ref{ln:classical-switch} in \Cref{alg:classical}) as follows: $1 - \frac{\hat{v}(\hat{\imath})}{v(i)} \ge \theta$ or $\left( i = \hat{\imath} ~ \text{and} ~ \frac{\hat{v}(\hat{\imath})}{v(\hat{\imath})} - 1 \ge \theta \right)$.
We also replace the definition of $M$ with
\begin{equation*}
M = \left\{ i \in N \,\middle|\, 1 - \frac{\hat{v}(\hat{\imath})}{v(i)} \ge \theta ~ \text{or} ~ \left(i = \hat{\imath} ~ \text{and} ~ \frac{\hat{v}(\hat{\imath})}{v(\hat{\imath})} - 1 \ge \theta \right)\right\}.
\end{equation*}

The same bound as in \Cref{thm:classical} can be shown by slightly modifying the proof.
Specifically, note that the definition of $\epsilon$ and $M$ is only used in the proof for the case $M = \emptyset$ and the case $\hat{\imath} \neq i^*$, $\hat{\imath} \not\in M$, and $i^* \not\in M$ (case (vi) in the proof of \Cref{thm:classical}).
In the case $M = \emptyset$, we have $\hat{v}(\hat{\imath}) \le (1 + \epsilon) v(\hat{\imath})$ and $\hat{v}(\hat{\imath}) \ge (1 - \epsilon) v(i^*)$ from the definition of $\epsilon$.
Therefore, we obtain
\begin{equation*}
	v(\hat{\imath}) \ge \frac{1}{1+\epsilon} \hat{v}(\hat{\imath}) \ge \frac{1-\epsilon}{1+\epsilon} v(i^*),
\end{equation*}
which can be used instead of~\eqref{eq:empty} in the original proof.
In the case $\hat{\imath} \neq i^*$, $\hat{\imath} \not\in M$, and $i^* \not\in M$, we have $\hat{v}(\hat{\imath}) \le (1 + \theta) v(\hat{\imath})$ and $\hat{v}(\hat{\imath}) \ge (1 - \theta) v(i^*)$ from the definition of $M$.
Therefore, we obtain
\begin{equation*}
	v(\hat{\imath}) \ge \frac{1}{1+\theta} \hat{v}(\hat{\imath}) \ge \frac{1-\theta}{1+\theta} v(i^*),
\end{equation*}
which can be used instead of~\eqref{eq:nonempty} in the original proof.
The rest of the proof holds as it is.

\subsection{Multiple-choice secretary problem}
As with the classical setting, \Cref{thm:multiple} still holds even if we replace the definition of $\epsilon$ with
\begin{equation*}
	\epsilon = \max \left\{ 1 - \min_{i \in N \setminus \hat{S}} \frac{\hat{v}(\hat{\imath}_{\min})}{v(i)}, \max_{i \in \hat{S}} \left| \frac{\hat{v}(i)}{v(i)} - 1 \right| \right\},
\end{equation*}
where $\hat{\imath}_{\min} \in \argmin_{i \in \hat{S}} \hat{v}(i)$ is a candidate with the minimum predicted value among the top $k$ predictions.
The value of this $\epsilon$ is always no larger than that of the original definition of $\epsilon$.

Along with this, we must change the condition under which the mode switches to \textsc{Secretary} (Line~\ref{ln:multiple-switch} in \Cref{alg:multiple}) as follows: $\left( i \in N \setminus \hat{S} ~ \text{and} ~ 1 - \frac{\hat{v}(\hat{\imath}_{\min})}{v(i)} \ge \theta \right)$ or $\left( i \in \hat{S} ~ \text{and} ~ \left| \frac{\hat{v}(i)}{v(i)} - 1 \right| \ge \theta \right)$.
We also replace the definition of $M$ with
\begin{equation*}
M = \left\{ i \in N \,\middle|\, \left( i \in N \setminus \hat{S} ~ \text{and} ~ 1 - \frac{\hat{v}(\hat{\imath}_{\min})}{v(i)} \ge \theta \right) ~ \text{or} ~ \left( i \in \hat{S} ~ \text{and} ~ \left| \frac{\hat{v}(i)}{v(i)} - 1 \right| \ge \theta \right) \right\}.
\end{equation*}

The same bound as in \Cref{thm:multiple} can be shown by slightly modifying the proof.
Specifically, note that the definition of $\epsilon$ and $M$ is only used in the proof for the case $M = \emptyset$ (\Cref{lem:empty}) and the case $|M| \le \sqrt{k} \ln k$ (\Cref{lem:small}).
In the case $M = \emptyset$, we have $\hat{v}(i) \le (1 + \epsilon) v(i)$ for each $i \in \hat{S}$ and $\hat{v}(\hat{\imath}_{\min}) \ge (1 - \epsilon) v(i)$ for each $i \in S^*$ from the definition of $\epsilon$.
Therefore, we obtain
\begin{equation*}
	 \sum_{i \in \hat{S}} v(i) \ge \sum_{i \in \hat{S}} \frac{1}{1 + \epsilon} \hat{v}(i) \ge \frac{1 - \epsilon}{1 + \epsilon} \sum_{i \in S^*} v(i) \ge (1 - 2\epsilon) \sum_{i \in S^*} v(i),
\end{equation*}
which can be used instead of~\eqref{eq:multiple-empty} in the proof of \Cref{lem:empty}.
In the case $|M| \le \sqrt{k} \ln k$, we have $(1 - \theta) v(i) \le \hat{v}(i) \le (1 + \theta) v(i)$ for each $i \in \hat{S} \setminus M$ and $\hat{v}(\hat{\imath}_{\min}) \ge (1 - \theta) v(i)$ for each $i \in S^* \setminus M$ from the definition of $M$.
Hence, we have
\begin{equation*}
	v(\hat{S} \setminus M) \ge \frac{1}{1 + \theta} \hat{v}(\hat{S} \setminus M) \ge \left( 1 - \theta \right) \hat{v}(\hat{S} \setminus M).
\end{equation*}
Moreover, if $|\hat{S} \setminus M| \ge |S^* \setminus M|$, then
\begin{align*}
	&\hat{v}(\hat{S} \setminus M)
	= \hat{v}((\hat{S} \cap S^*) \setminus M) + \hat{v}((\hat{S} \setminus S^*) \setminus M)
	\ge \hat{v}((\hat{S} \cap S^*) \setminus M) + |(\hat{S} \setminus S^*) \setminus M| \hat{v}(\hat{\imath}_{\min})\\
	&\ge (1 - \theta) v((\hat{S} \cap S^*) \setminus M) + (1 - \theta) v((S^* \setminus \hat{S}) \setminus M)
	\ge \left(1 - \theta\right) v(S^* \setminus M).
\end{align*}
	Otherwise, we obtain
\begin{align*}
	\hat{v}(\hat{S} \setminus M)
	&= \hat{v}((\hat{S} \cap S^*) \setminus M) + \hat{v}((\hat{S} \setminus S^*) \setminus M)
	\ge \frac{|\hat{S} \setminus M|}{|S^* \setminus M|} \left( \hat{v}((\hat{S} \cap S^*) \setminus M) + |(S^* \setminus \hat{S}) \setminus M| \hat{v}(\hat{\imath}_{\min}) \right)\\
	&\ge \frac{k - \sqrt{k} \ln k}{k} \left( (1 - \theta) v((\hat{S} \cap S^*) \setminus M) + (1 - \theta) v((S^* \setminus \hat{S}) \setminus M) \right) \tag{since $|M| \le \sqrt{k} \ln k$}\\
	&= \left(1 - \frac{\ln k}{\sqrt{k}} \right) \left(1 - \theta\right) v(S^* \setminus M)
	\ge \left(1 - \theta - \frac{\ln k}{\sqrt{k}} \right) v(S^* \setminus M).
\end{align*}
Therefore, in both cases, we have
\begin{equation*}
	\hat{v}(\hat{S} \setminus M) \ge \left(1 - \theta - \frac{\ln k}{\sqrt{k}} \right) v(S^* \setminus M).
\end{equation*}
Combining these inequalities, we obtain
\begin{equation*}
	v(\hat{S} \setminus M) \ge \left( 1 - \theta \right) \hat{v}(\hat{S} \setminus M) \ge \left(1 - 2 \theta - \frac{\ln k}{\sqrt{k}} \right) v(S^* \setminus M),
\end{equation*}
which can be used instead of~\eqref{eq:multiple-small} in the original proof.
The rest of the proof holds as it is.

\section{Experiments}\label{sec:experiments}

In this section, we present our experimental results.
The goal of the experiments is to illustrate for what kind of problem instances our proposed algorithms perform well, not to show they perform well for all problem instances.
It is because there is a tradeoff between the case of accurate predictions and the case of inaccurate predictions, and then no algorithm can achieve a good competitive ratio for every problem instance.
This tradeoff can be seen from our hardness result (\Cref{thm:hardness-randomized}), which claims any algorithm that performs better for accurate predictions performs worse than the no-prediction algorithm for inaccurate predictions.

More specifically, the experiments are aimed at showing that our proposed algorithms have the following two advantages compared with existing algorithms:
(i) it hires the optimal candidates if the predictions are accurate
and (ii) its worst-case performance is guaranteed even if the predictions are completely wrong.
To verify these advantages, we apply our proposed algorithms to three types of randomly generated problem instances and compare their performance with benchmark algorithms.

\subsection{Experimental settings}
All the algorithms are implemented in Python 3.8.
We conducted the experiments in a machine with Intel Core i7 (2.3 GHz and 4 cores) and 32 GB RAM.

\paragraph{Datasets.}
In all instances, we set $n = 100$.
Let $\epsilon > 0 $ be an error parameter and $k$ be the number of candidates to be selected.
We considered 11 different values of $\epsilon \in \{0,0.1,\dots,0.9,1\}$ and three different values of $k$: $k = 1$ (the classical setting), $k = 10$, and $k = 50$.

To compare the proposed algorithms with benchmarks, we generated three types of problem instances: \textsf{Uniform}, \textsf{Adversarial}, and \textsf{Almost-Constant}.
Each problem instance is equipped with different types of predictions: uniformly perturbed predictions, adversarially perturbed predictions, and constant predictions irrelevant to actual values.
We design these problem instances so that the algorithm that selects top-$k$ predictions performs well for \textsf{Uniform}, poorly for \textsf{Adversarial}, and completely random for \textsf{Almost-Constant}.
\begin{itemize}
\item In \textsf{Uniform}, we sample each $v(i)$ independently from the exponential distribution with parameter $1$.
Since the exponential distribution is an unbounded distribution that generates a large value with a small probability, the optimal value is almost always by far the largest.
Hence, the difference between good candidates and bad candidates becomes distinct, and therefore, the algorithmic performance is clearly expressed in the empirical competitive ratio.
The predicted value is generated by perturbing the actual value with the uniform distribution, i.e., $\hat{v}(i) = \delta_i v(i)$, where $\delta_i$ is sampled uniformly and independently from $[1-\epsilon,1+\epsilon]$.
\item In \textsf{Adversarial}, we consider a situation where the adversary tries to confuse large values and small values while the multiplicative error is at most $\epsilon$.
We sample each actual value $v(i)$ independently from the exponential distribution with parameter $1$ as in \textsf{Uniform}.
Then we adversarially perturbed their predicted values, that is, we set $\hat{v}(i) = (1-\epsilon)v(i)$ for the candidates with top $50$ actual values and $\hat{v}(i) = (1+\epsilon)v(i)$ for the other $50$ candidates.
\item
In \textsf{Almost-Constant}, all the candidates have almost the same predicted values, but $k$ candidates selected uniformly at random have large actual values.
Formally, all the predicted values are set to $1$.
Then, the actual values of randomly selected $k$ candidates are set to $1 / (1-\epsilon)$, while the others' actual values are set to $1$.
Since $1 / (1-\epsilon)$ cannot be defined, we omit the case of $\epsilon = 1$ only for \textsf{Almost-Constant}.
These predicted values and actual values are perturbed with a random value generated from the uniform distribution over $[0, 0.01]$ for randomly breaking ties.
\end{itemize}

We generated 100 datasets of actual and predicted values for each of \textsf{Uniform}, \textsf{Adversarial}, and \textsf{Almost-Constant}.
For each dataset, we estimated the competitive ratio by running each algorithm on the randomly permuted instance 100 times and considered the average value.

\paragraph{Benchmarks.}
For each setting, we compare the proposed algorithm with varying parameter $\theta \in \{0.1,0.3,0.5,0.7,0.9\}$ to the following benchmarks.
\begin{itemize}
\item We plotted the competitive ratios of the original Dynkin or Kleinberg that does not use predictions as ``w/o pred.''
\item We also plotted the value obtained by selecting the top $k$ elements in terms of the predicted values (``top-$k$ pred.'' in the figure).
\item 
For the classical setting, we implemented the algorithm proposed by \citet{AGKK20} for the setting where only a prediction of the optimal value is provided (``AGKK'' in the figure).
We use $\max_{i \in N} \hat{v}(i)$ as the prediction of the optimal value.
This algorithm has two parameters $c \in [1, \infty)$ and $\lambda \in [0, p^*]$, where $p^*$ is the prediction of the optimal value.
Since this experiment is aimed at evaluating the performance for accurate predictions when the worst-case competitive ratio is guaranteed, we set $c = 1.71 \approx \frac{1}{0.215\mathrm{e}}$ so that the worst-case competitive ratio $1/(ce)$ is almost equal to that of learned Dynkin.
Among possible choices of $\lambda \in [0,p^*]$, we select two values $\lambda  \in \{ 0.3p^*, 0.7p^* \}$.
\item
For the classical setting, we implemented an algorithm proposed by \citet{CSZ19} for the prophet secretary problem (``prophet secretary'' in the figure).
The prophet secretary problem assumes that the exact probability distributions of the actual values of all candidates are provided.
Among several thresholding techniques proposed in \citet{CSZ19}, we use one that can be easily computed, whose thresholds are $\alpha(t) = 0.53 - 0.38t$ times the expected maximum for each arrival time $t \in [0, 1]$, which is $0.657$-competitive for the prophet secretary problem.
Since the exact probability distributions are not available in our setting, we apply this algorithm as if the provided probability distribution for each candidate $i \in N$ is the uniform distribution over $[\hat{v}(i) - \theta, \hat{v}(i) + \theta]$, where $\theta$ is a parameter we choose.
We choose the uniform distribution because it is difficult to efficiently compute the thresholds for complicated distributions.
The threshold $\tau_t$ at time $t$ is defined so that $\Pr(\max_{i \in [n]} v(i) \le \tau_t) = \alpha(t)$.
To obtain this value, we need the quantile of the cumulative distribution function of $\max_{i \in [n]} v(i)$, but it is difficult for complicated distributions.
We show results for two choices of $\theta \in \{ 0.3p^*, 0.7p^* \}$, where $p^* = \max_{i \in N} \hat{v}(i)$.
\end{itemize}

\begin{figure*}[t!]
	\centering
	\subfloat[][\textsf{Uniform}, $k=1$\label{fig:uniform_1}]{\includegraphics[height=.25\hsize]{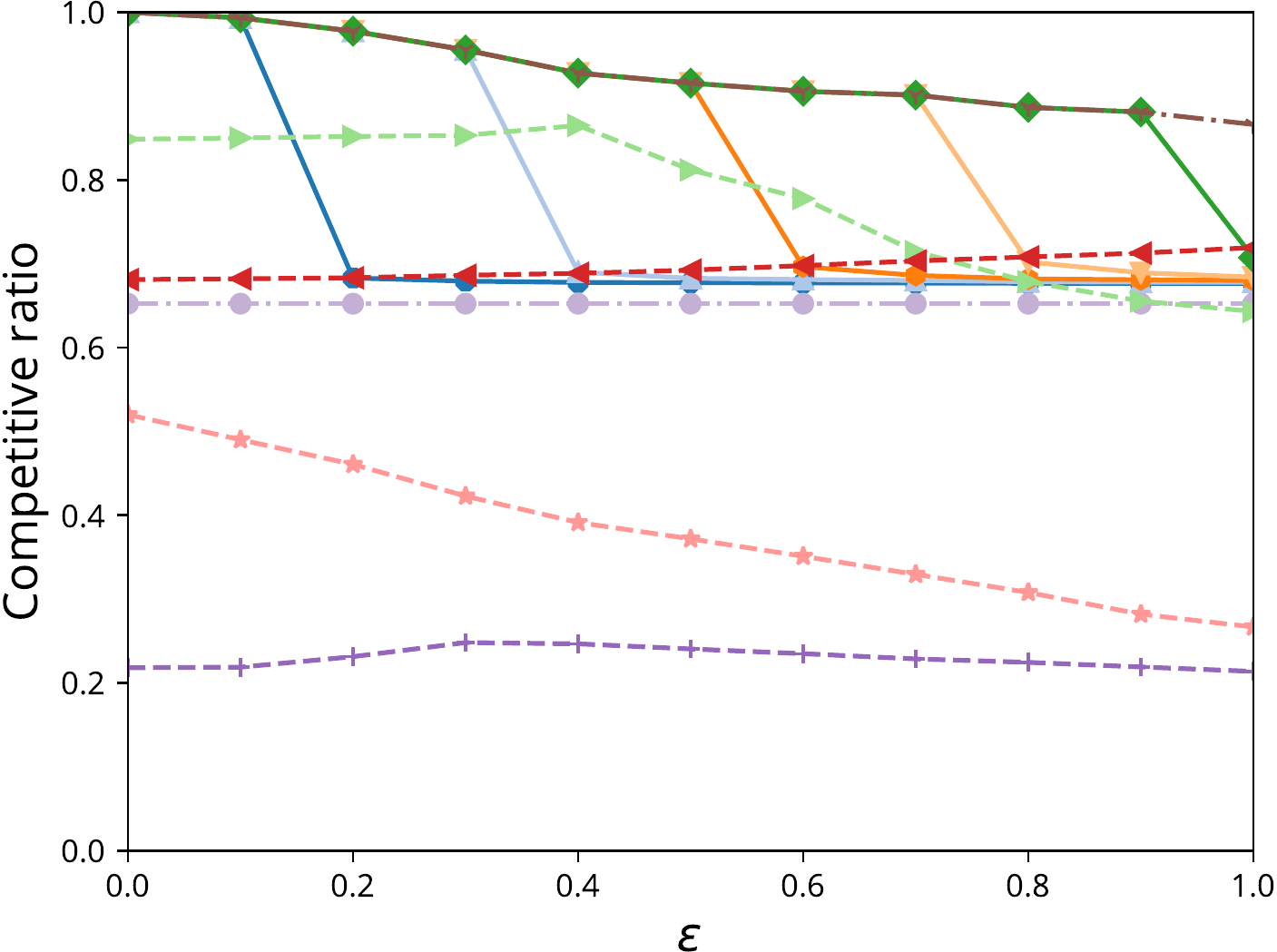}}
	\subfloat[][\textsf{Adversarial}, $k=1$\label{fig:pathological_1}]{\includegraphics[height=.25\hsize]{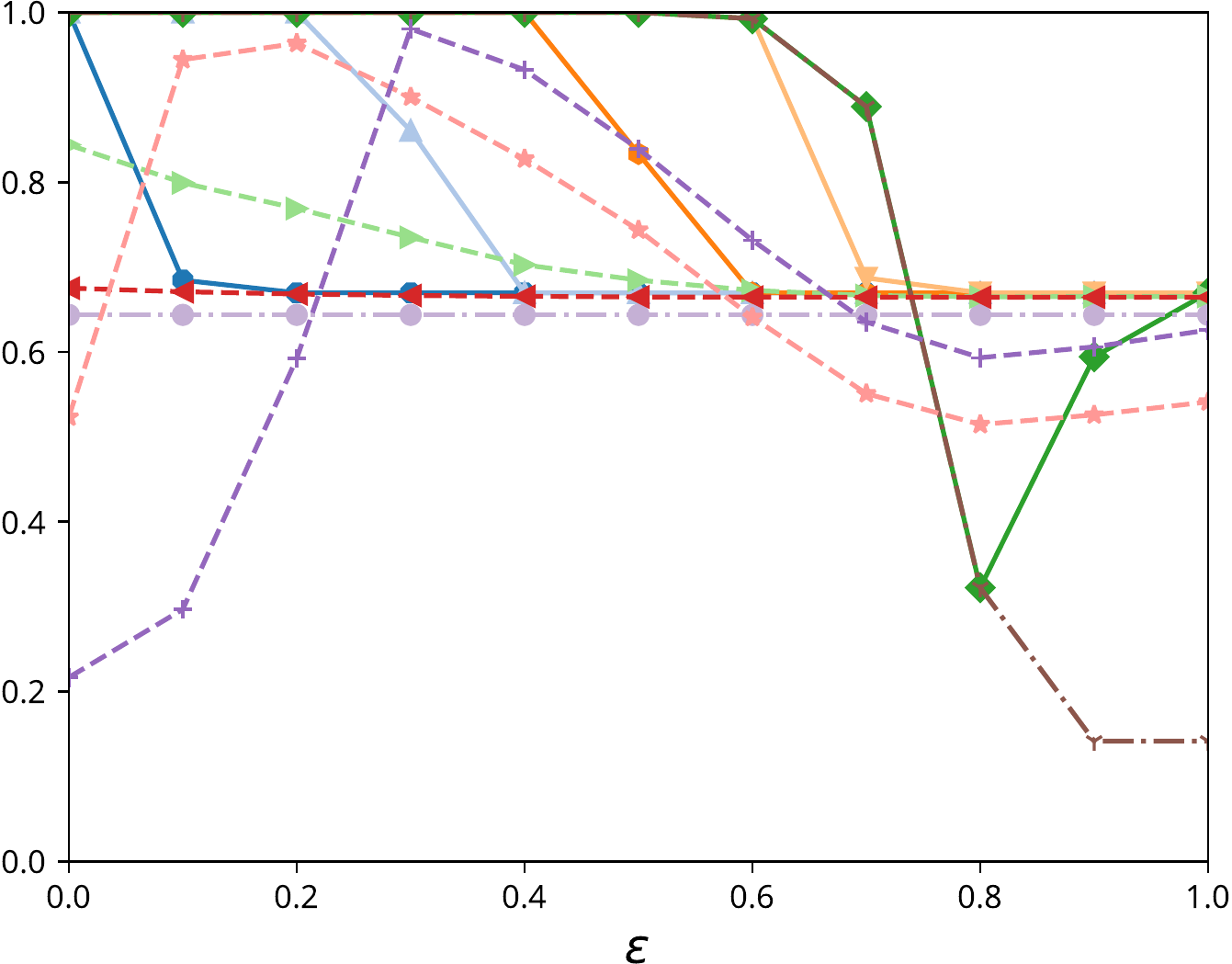}}
	\subfloat[][\textsf{Almost-Constant}, $k=1$\label{fig:permuted_1}]{\includegraphics[height=.25\hsize]{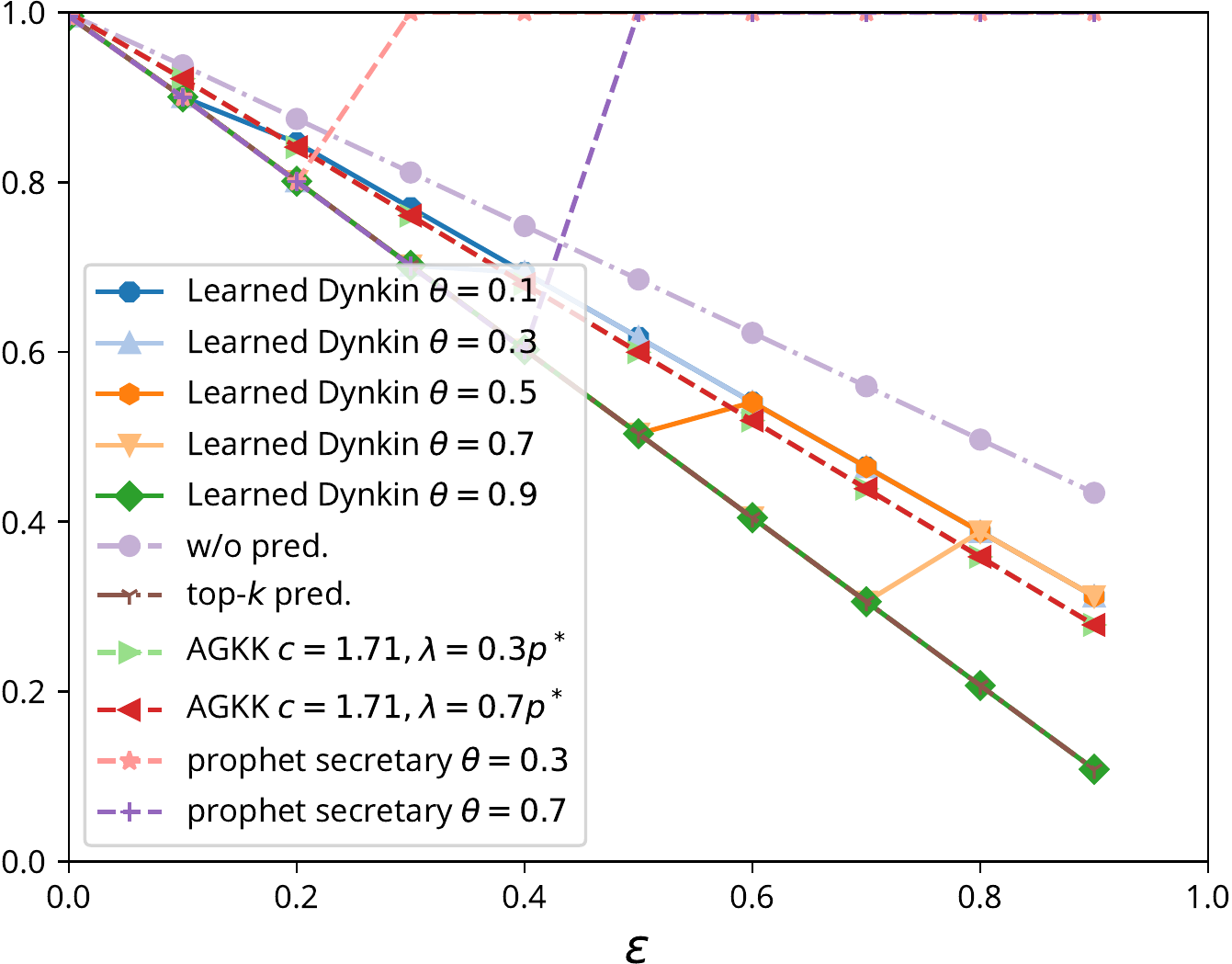}}\\
	\centering
	\subfloat[][\textsf{Uniform}, $k=10$]{\includegraphics[height=.25\hsize]{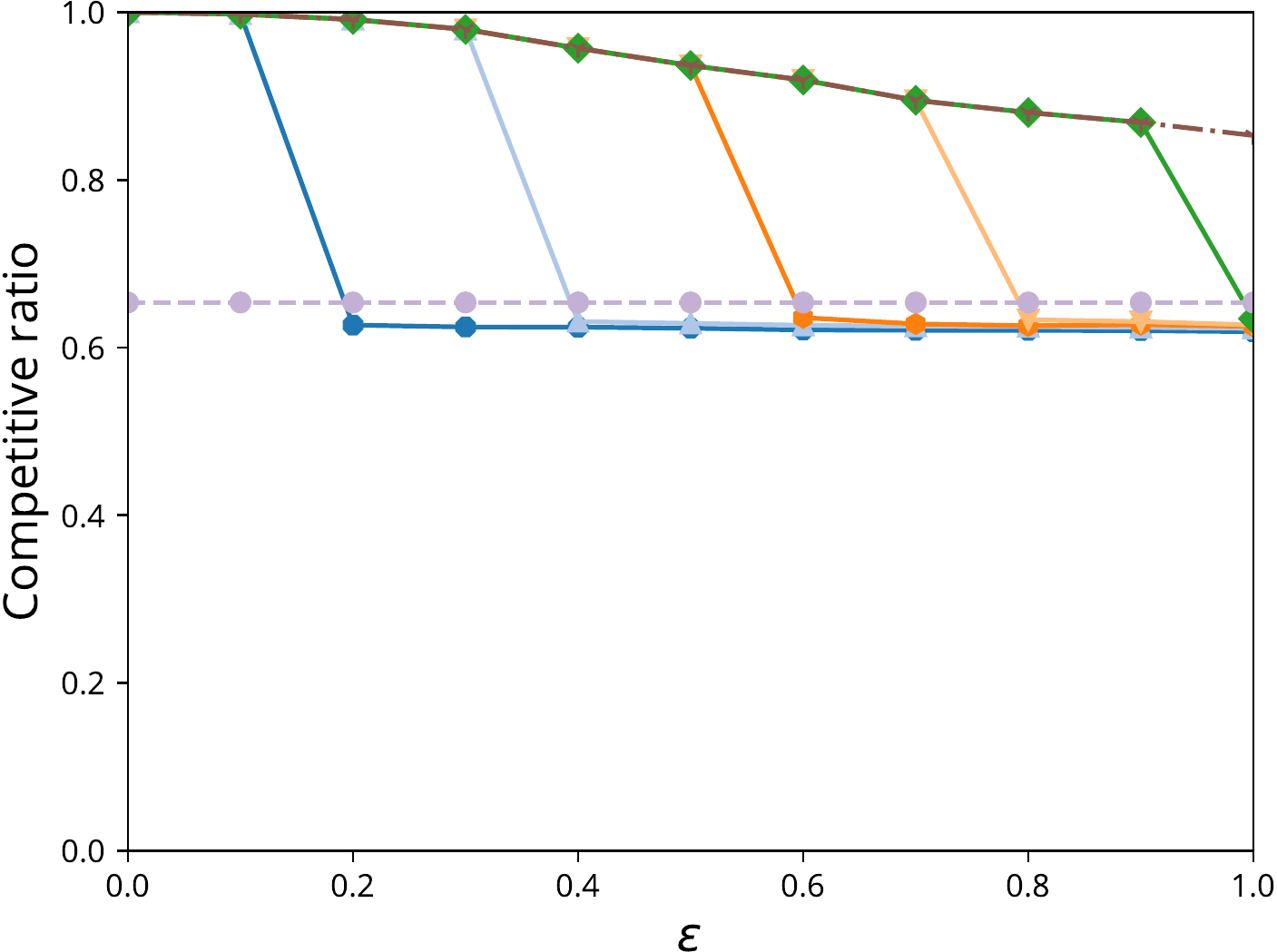}}
	\subfloat[][\textsf{Adversarial}, $k=10$]{\includegraphics[height=.25\hsize]{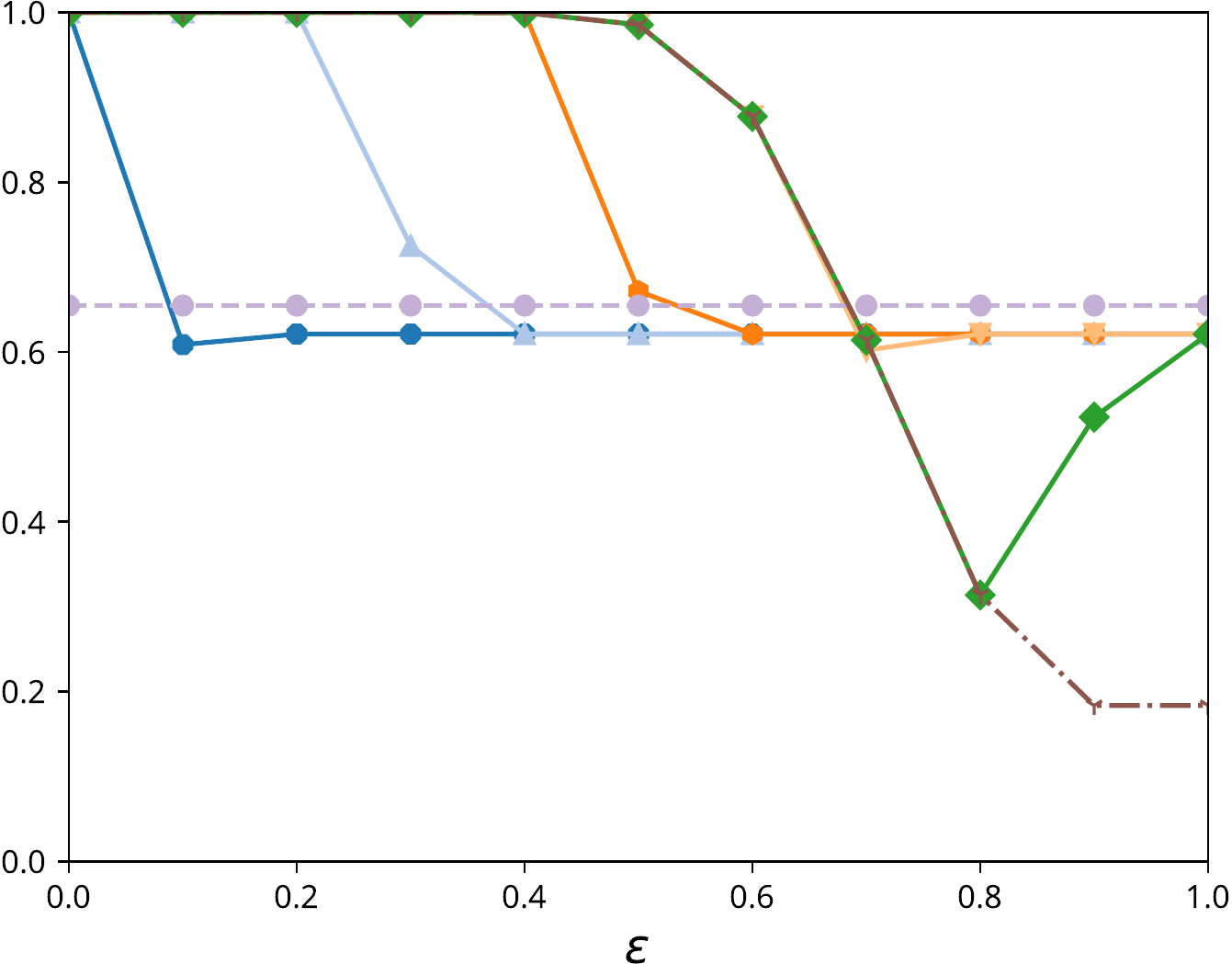}}
	\subfloat[][\textsf{Almost-Constant}, $k=10$]{\includegraphics[height=.25\hsize]{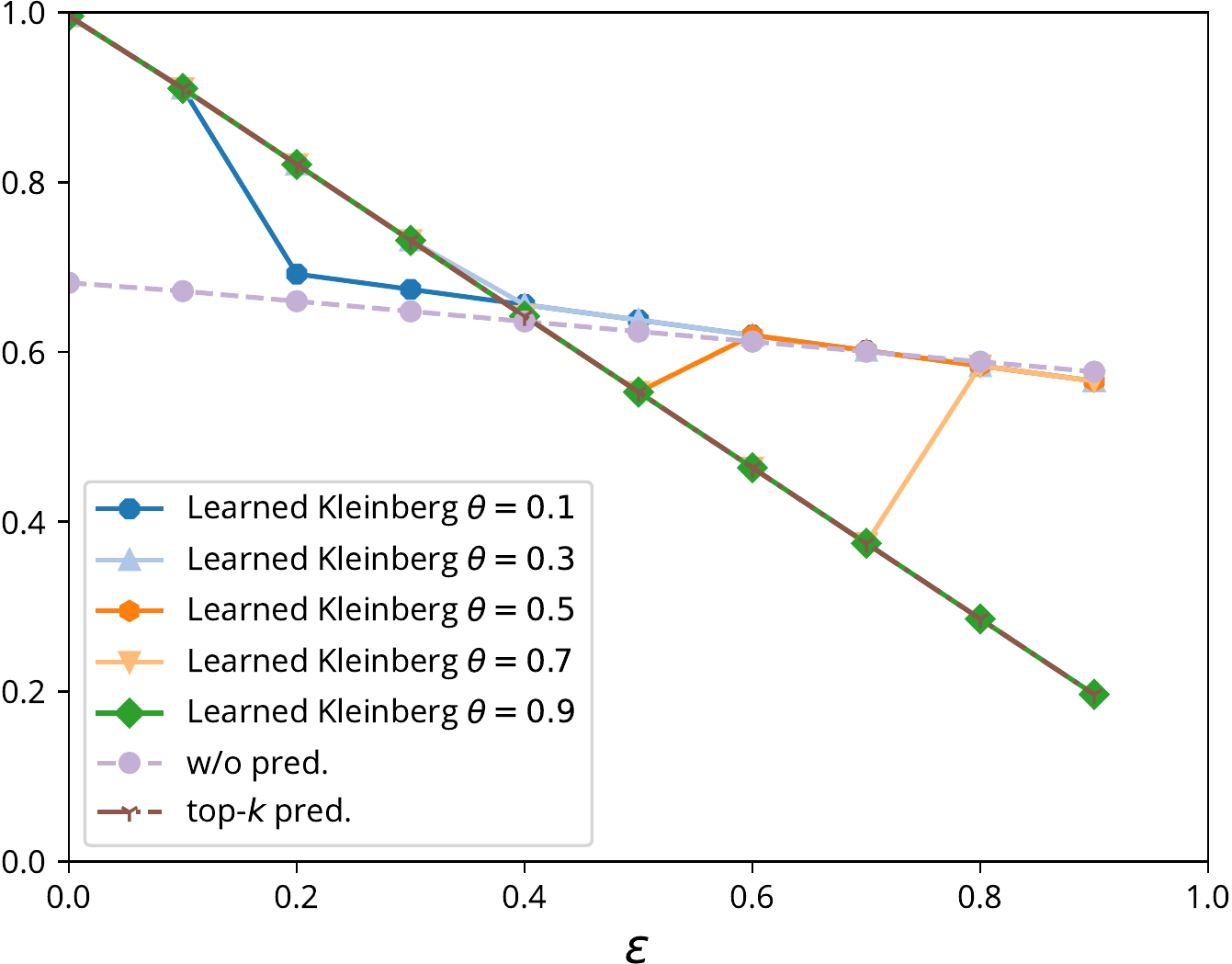}}\\

	\centering
	\subfloat[][\textsf{Uniform}, $k=50$]{\includegraphics[height=.25\hsize]{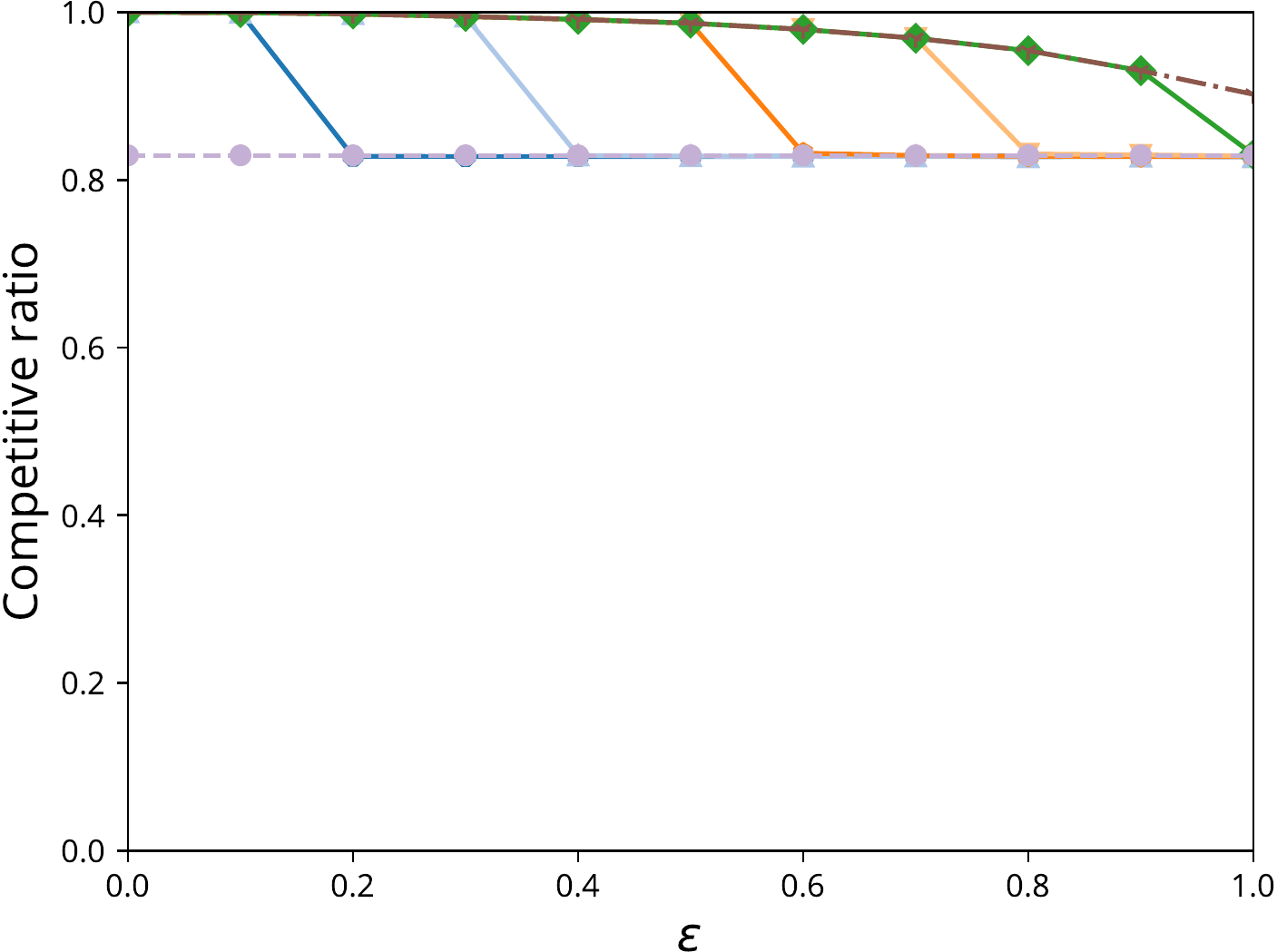}}
	\subfloat[][\textsf{Adversarial}, $k=50$]{\includegraphics[height=.25\hsize]{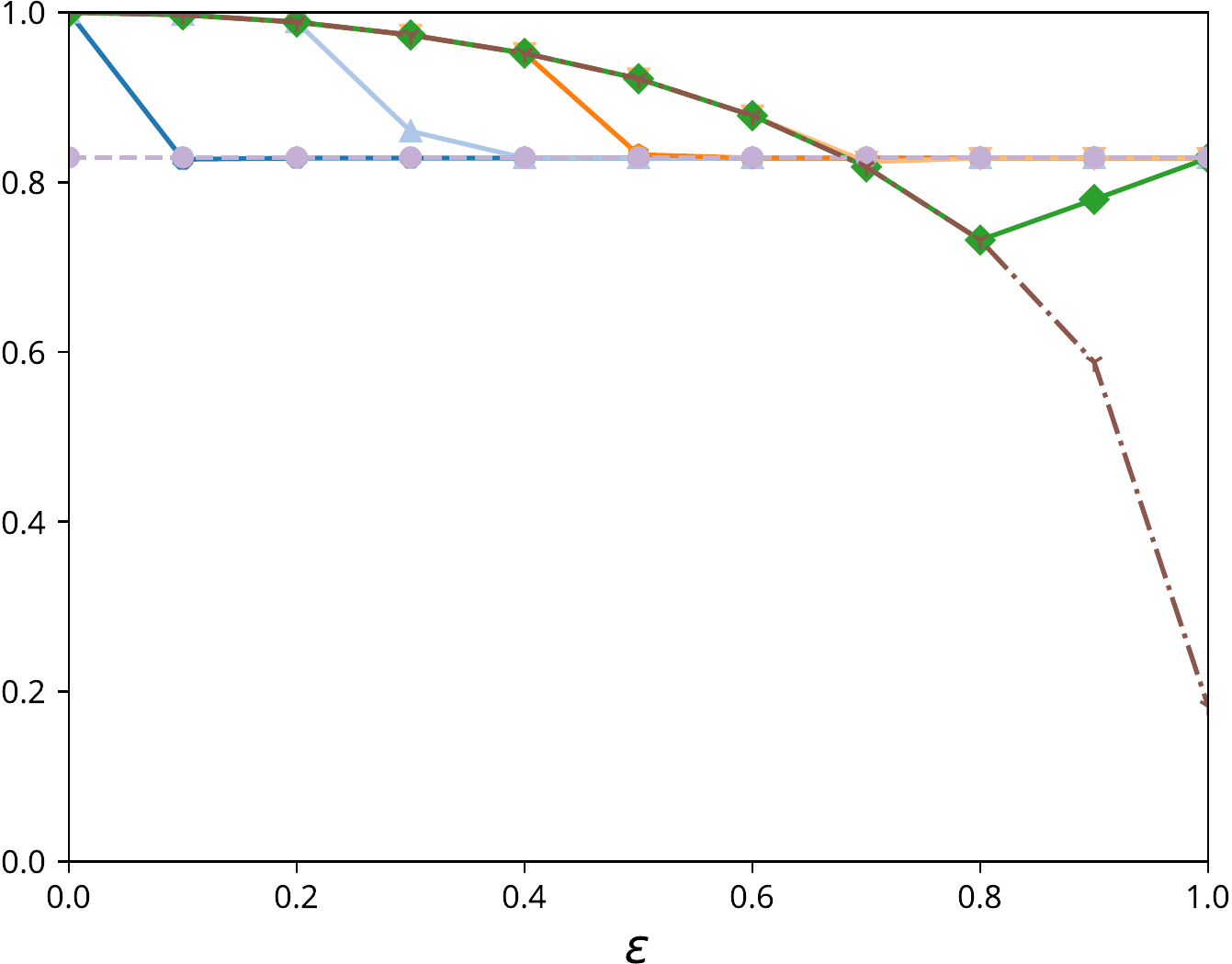}}
	\subfloat[][\textsf{Almost-Constant}, $k=50$]{\includegraphics[height=.25\hsize]{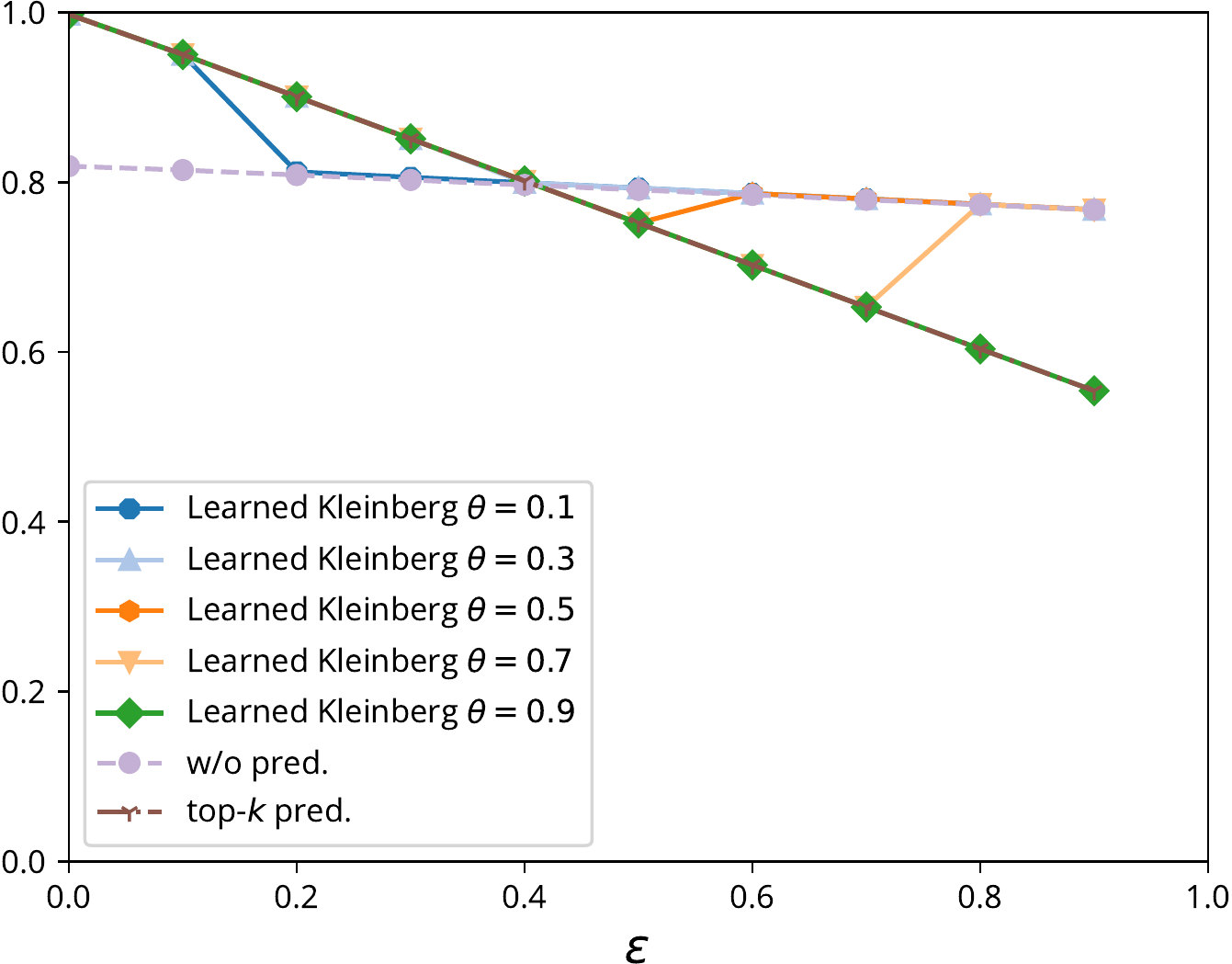}}
	\caption{Competitive ratios for the classical setting ($k=1$) and multiple-choice settings ($k=10,50$). We compare the proposed and existing algorithms for three types of datasets: \textsf{Uniform}, \textsf{Adversarial}, and \textsf{Almost-Constant}.}\label{fig:pathological}
\end{figure*}

\subsection{Experimental results}

\Cref{fig:pathological} shows the competitive ratio for \textsf{Uniform}, \textsf{Adversarial}, and \textsf{Almost-Constant}, respectively.
For the classical setting, as the theoretical results guarantee, we can observe that the competitive ratio of learned Dynkin is $1$ for $\epsilon = 0$ and never worse than the guaranteed competitive ratio $0.215$ for all $\epsilon$ if we set $\theta \approx 0.646$.

By comparing the proposed algorithms with top-$k$ pred.\ and w/o pred., we can observe that learned Dynkin and Kleinberg almost completely follow the predictions (top-$k$ pred.) for $\epsilon \le \theta$ and show competitive performance with Dynkin's or Kleinberg's algorithm (w/o pred.) for $\epsilon > \theta$.
For \textsf{Uniform}, since candidates with top-$k$ predictions tend to have high actual values, setting $\theta$ higher leads to better results.
On the other hand, \textsf{Almost-Constant}, setting $\theta$ smaller leads to better results.
For \textsf{Adversarial}, top-$k$ pred.\ performs better than w/o pred.\ for small $\epsilon$ and w/o pred.\ performs better than top-$k$ pred.\ for large $\epsilon$.
We can see that learned Dynkin with $\theta = 0.7$ has the best of both worlds, that is, it is competitive with top-$k$ pred.\ for small $\epsilon$ and competitive with w/o pred.\ for large $\epsilon$.
In general, it is difficult to decide an optimal $\theta$ before observing the actual values, but we can guarantee the worst-case competitive ratio by setting $\theta$ as the theorems suggest.

\paragraph{Comparison with AGKK algorithm.}
Here we compare learned Dynkin with AGKK algorithm.
For \textsf{Uniform} and \textsf{Adversarial} with small $\epsilon$, while learned Dynkin with any $\theta$ hires the optimal candidate, AGKK algorithm fails to hire the optimal one.
Note that learned Dynkin with $\theta \approx 0.646$ has the same worst-case guarantee as AGKK algorithm with $c = 1.71$.
Hence, we can claim that learned Dynkin has an advantage in its performance for small $\epsilon$ compared to AGKK algorithm with the same worst-case guarantee.
In general, as we can observe in our experimental results, which one is better depends on the datasets and the parameters $\theta$ and $\lambda$.

\paragraph{Comparison with the prophet secretary algorithm.}
Next, we compare learned Dynkin algorithm with the prophet secretary algorithm.
The prophet secretary algorithm performs poorly for \textsf{Uniform}.
For \textsf{Adversarial} and \textsf{Almost-Constant}, the prophet secretary algorithm performs well for some $\epsilon$, but its performance is unstable.
A possible reason is that the prophet secretary algorithm is sensitive to an error in the given distributions.
The predictions of \textsf{Uniform} are generated by adding the uniform distribution, but the posterior distributions of the actual values given the predictions are not the uniform distribution.
This error in the given distribution seems to significantly worsen the performance of the prophet secretary algorithm.
We conclude that the prophet secretary algorithm does not fit the setting with adversarial predictions.

\section{Concluding remarks}\label{sec:concluding-remarks}
In this study, we considered the classical and multiple-choice secretary problems with predictions.
We proposed algorithms that perform almost optimally if the predictions are accurate and competitively with the standard algorithms that do not use the predictions if the predictions are erroneous.
For the classical setting, we showed that no algorithm can perform optimally if the predictions are accurate while keeping the worst-case performance competitive with the classical secretary algorithm.

A possible direction for future research is to improve the competitive ratios.
For the classical setting, there is a gap between the lower bound $\max\{0.215,1-O(\epsilon)\}$ and upper bound $\max\{0.348,1-C\epsilon\}$ for any constant $C > 0$.
For the multiple-choice setting, since the optimal competitive ratio for the setting without predictions is $1-\Theta(1/\sqrt{k})$, it is interesting to consider removing the $\ln k$ term from the competitive ratio $1-O(\min\{\ln k/\sqrt{k}, \epsilon\})$.

Another direction is to consider combinatorial constraints such as matroid or knapsack constraints.
\citet{AGKK20} considered online bipartite matching and graphic matroids constraints in a slightly different setting.
Since the decision maker is given more information in our model, it might be possible to obtain better competitive ratios for these constraints.

\bibliographystyle{plainnat}
\bibliography{main}


\end{document}